\DeclarePairedDelimiter{\ceil}{\lceil}{\rceil}
\let\mathscr\mathbscr
\newtheorem{Theorem}{Theorem}
\newtheorem{Proposition}{Proposition}
\newtheorem{Lemma}{Lemma}
\newtheorem{Example}{Example}
\newtheorem{Remark}{Remark}
\newtheorem{Definition}{Definition}
\newcommand{\indep}{\rotatebox[origin=c]{90}{$\models$}}
\newtheorem{Claim}{Claim}
\begin{document}
%
% paper title
% Titles are generally capitalized except for words such as a, an, and, as,
% at, but, by, for, in, nor, of, on, or, the, to and up, which are usually
% not capitalized unless they are the first or last word of the title.
% Linebreaks \\ can be used within to get better formatting as desired.
% Do not put math or special symbols in the title.
\title{On the Sub-optimality of Single-letter Coding in Multi-terminal Communications}

% author names and affiliations
% use a multiple column layout for up to three different
% affiliations
\author{\IEEEauthorblockN{Farhad Shirani Chaharsooghi}
\IEEEauthorblockA{Electrical Engineering and\\Computer Science\\
University of Michigan\\
Ann Arbor, Michigan, 48105\\
Email: fshirani@umich.edu }
\and

\IEEEauthorblockN{S. Sandeep Pradhan}
\IEEEauthorblockA{Electrical Engineering and\\Computer Science\\
University of Michigan\\
Ann Arbor, Michigan, 48105\\
Email: pradhanv@umich.edu}}
%\and
%\IEEEauthorblockN{James Kirk\\ and Montgomery Scott}
%\IEEEauthorblockA{Starfleet Academy\\
%San Francisco, California 96678--2391\\
%Telephone: (800) 555--1212\\
%Fax: (888) 555--1212}}

% conference papers do not typically use \thanks and this command
% is locked out in conference mode. If really needed, such as for
% the acknowledgment of grants, issue a \IEEEoverridecommandlockouts
% after \documentclass

% for over three affiliations, or if they all won't fit within the width
% of the page, use this alternative format:
% 
%\author{\IEEEauthorblockN{Michael Shell\IEEEauthorrefmark{1},
%Homer Simpson\IEEEauthorrefmark{2},
%James Kirk\IEEEauthorrefmark{3}, 
%Montgomery Scott\IEEEauthorrefmark{3} and
%Eldon Tyrell\IEEEauthorrefmark{4}}
%\IEEEauthorblockA{\IEEEauthorrefmark{1}School of Electrical and Computer Engineering\\
%Georgia Institute of Technology,
%Atlanta, Georgia 30332--0250\\ Email: see http://www.michaelshell.org/contact.html}
%\IEEEauthorblockA{\IEEEauthorrefmark{2}Twentieth Century Fox, Springfield, USA\\
%Email: homer@thesimpsons.com}
%\IEEEauthorblockA{\IEEEauthorrefmark{3}Starfleet Academy, San Francisco, California 96678-2391\\
%Telephone: (800) 555--1212, Fax: (888) 555--1212}
%\IEEEauthorblockA{\IEEEauthorrefmark{4}Tyrell Inc., 123 Replicant Street, Los Angeles, California 90210--4321}}

% use for special paper notices
%\IEEEspecialpapernotice{(Invited Paper)}

% make the title area
\maketitle

% As a general rule, do not put math, special symbols or citations
% in the abstract
\begin{abstract}
We investigate binary block-codes (BBC). A BBC is defined as a vector of Boolean functions. We consider BBCs which are generated randomly, and using single-letter distributions. We characterize the vector of dependency spectrums of these BBCs. We use this vector to upper-bound the correlation between the outputs of two distributed BBCs. Finally, the upper-bound is used to show that the large blocklength single-letter coding schemes in the literature are sub-optimal in some multiterminal communication settings.
\end{abstract}

% no keywords

% For peer review papers, you can put extra information on the cover
% page as needed:
% \ifCLASSOPTIONpeerreview
% \begin{center} \bfseries EDICS Category: 3-BBND \end{center}
% \fi
%
% For peerreview papers, this IEEEtran command inserts a page break and
% creates the second title. It will be ignored for other modes.
\IEEEpeerreviewmaketitle

\section{Introduction}

In his paper, "A Mathematical Theory of Communications" \cite{Shannon} - often regarded as the Magna Carta of digital communications - Shannon pointed out that in order to exploit the redundancy of the source in data compression, it is necessary to compress large blocks of the source simultaneously. More precisely, optimality is only approached as the effective length of the encoding functions approaches infinity. In this context, the effective length of an encoding function can be interpreted as the average number of input elements necessary to estimate an output element of the encoding function with high accuracy.
The same observation was made in the case of point-to-point (PtP) channel coding. As a result, a common feature of the coding schemes used in PtP communication is that they have large effective lengths. Loosely speaking, this means that each output element in these schemes is a function of the entire input sequence, where the length of the input sequence is asymptotically large. In the source coding problem, by compressing large blocks at the same time, one can exploit the redundancy in the source. In the channel coding problem, transmitting the input message over large blocks allows the decoder to exploit the typicality of the noise vector. These large blocklength gains due to the typicality of the noise vector in channel coding, and the source vector in source coding are also present in coding over networks. However, in multiterminal communication it is often desirable to maintain correlation among the output sequences at different nodes. This requirement can be due to explicit constraints in the problem statement such as joint distortion measures in multi-terminal source coding, or it can be due to implicit factors such as the need for interference reduction in multi-terminal channel coding, or it can be due to the nature of the shared communication channel. In the latter case, correlation between the outputs is necessary as a means for further cooperation among the transmitters. In this paper, we show that pairs of encoding functions with large effective lengths are inefficient in coordinating their outputs. This is due to the fact that such encoding functions are unable to produce highly correlated outputs from highly correlated inputs. The loss of correlation undermines the encoders' ability to cooperate and take advantage of the multi-terminal nature of the problem. In PtP communication problems, where there is only one transmitter, the necessity for cooperation does not manifest itself. For this reason, although encoders with asymptotically large effective lengths are optimal in PtP communications, they are sub-optimal in the network communication case. It is worth mentioning that there is a subtle difference between the effective length mentioned here and the blocklength of an encoding function. The effective length of an encoding function can be interpreted as the average number of input elements necessary to estimate an output element with high precision. Whereas the blocklength is the length of the sequence which is the input to the encoding function. It is well-known that the performance of block-codes is super-additive with respect to blocklength, meaning that the best performance of block-codes of a certain length is an increasing function of the blocklength. However, the performance as a function of the effective length of the code is not necessarily super-additive. 

The loss in correlation caused by the application of large effective-length codes causes a discontinuity in the performance of such codes in some multi-terminal problems. This was first observed for the problem of distributed source coding \cite{wagner}, in the Berger-Tung coding strategy. It was noted that when common information is available to the two encoders in the distributed source coding problem, the performance is discontinuously better than when the common information is replaced with highly correlated components. In \cite{FinLen}, we argued that the discontinuity in performance is due to the fact that the encoding functions in the Berger-Tung scheme preserve common information, but are unable to preserve correlation between highly correlated components. We proposed a new coding scheme, and derived an improved achievable rate-distortion region for the two user distributed source coding problem \cite{FinRD}. The new strategy uses a concatenated coding scheme which consists of one layer of codes with finite effective length, and one layer of codes with asymptotically large effective lengths.

In this paper, we prove the assertion that there is a discontinuity in the correlation preserving abilities of encoding functions produced using single-letter coding schemes such as the Berger-Tung scheme \cite{Markov}. The single-letter coding schemes considered in this work are general and include Shannon's point-to-point source coding scheme \cite{Shannon}, the Berger-Tung scheme \cite{Markov}, the Zhang-Berger scheme \cite{ZB}. The proof involves several steps. We make extensive use of the mathematical machinery developed in \cite{arxiv1}. 

In \cite{arxiv1}, we provide a bound on the correlation between the outputs of two arbitrary Boolean functions. The bound is presented as a function of the dependency spectrum of the Boolean functions. The dependency spectrum is a generalization of the effective length of a Boolean function, and is explained in more detail in the next sections. Here, we consider two arbitrary binary block codes (BBC) as defined in \cite{ComInf1}. The two encoding functions are applied to two correlated discrete, memoryless sources  (DMS). We define the correlation between the outputs of these encoding functions as the average probability that any two output-bits are equal, where the average is over the elements of the output vector. Using the bound in \cite{arxiv1}, we show that codes generated by large blocklength single-letter coding schemes are incapable of producing highly correlated outputs from highly correlated inputs. More precisely, we show that as the block-length increases, the outputs of the quantizers at each terminal become less correlated. This leads us to conjecture that such schemes are sub-optimal in network communication problems. We provide one example in the problem of lossless transmission of correlated sources over the interference channel, and prove that in this example single-letter coding schemes are sub-optimal. 

The rest of the paper is organized as follows: Section \ref{sec:Not} gives the notation used in this work. Section \ref{sec:sum} includes a summary of the mathematical tools in \cite{arxiv1} which are used here. In Section \ref{sec:Cor}, we derive a probabilistic bound on the correlation between the outputs of encoding functions produced using single-letter coding schemes. Section \ref{sec:Ex} includes an example where we show the sub-optimality of single-letter schemes in a specific multi-termianal problem. Section \ref{sec:conc} concludes the paper.
\section{Notation}\label{sec:Not}
In this section, we introduce the notation used in the paper. We represent random variables by capital letters such as $X, U$. Sets are denoted by calligraphic letters such as $\mathcal{X}, \mathcal{U}$. For a random variable $X$, the corresponding probability space is $(\mathcal{X}, \mathbf{F}_{X}, P_X)$, where $\mathbf{F}$ is the underlying $\sigma$-field. The set of all subsets of $\mathcal{X}$ is written as $2^{\mathcal{X}}$. There are three different notations used for different classes of vectors. For random variables, the $n$-length vector $(X_1,X_2,\cdots,X_n)$, where $X_i\in \mathcal{X}$, is denoted by $X^n\in \mathcal{X}^n$. For the vector of functions $(e_1(X),e_2(X),\cdots, e_n(X))$ we use the notation $\underline{e}(X)$. The binary string $(i_1,i_2,\cdots,i_n), i_j\in \{0,1\}$ is written as $\mathbf{i}$.  As an example, the set of functions $\{\underline{e}_{\mathbf{i}}(X^n)| \mathbf{i}\in \{0,1\}^n\}$ is the set of $n$-length vectors of functions $(e_{1,\mathbf{i}},e_{2,\mathbf{i}},\cdots,e_{n,\mathbf{i}})$ operating on the vector $(X_1,X_2,\cdots,X_n)$ each indexed by an $n$-length binary string $(i_1,i_2,\cdots,i_n)$. The vector of binary strings $(\mathbf{i}_1,\mathbf{i}_2,\cdots, \mathbf{i}_n)$ denotes the standard basis for the $n$-dimensional space (e.g. $\mathbf{i}_1=(0,0,\cdots,0,1)$). The vector of random variables $(X_{j_1},X_{j_2},\cdots, X_{j_k}), j_i\in [1,n], j_i\neq j_k$, is denoted by $X_{\mathbf{i}}$, where $i_{j_l}=1, \forall l\in [1,k]$. For example, take $n=3$, the vector $(X_1,X_3)$ is denoted by $X_{101}$, and the vector $(X_1,X_2)$ by $X_{110}$.  Particularly, $X_{\mathbf{i}_j}=X_j , j\in [1,n]$. Also, for $\mathbf{t}=\underline{1}$, the all-ones vector, $X_{\mathbf{t}}=X^n$.
For two binary strings $\mathbf{i},\mathbf{j}$, we write $\mathbf{i}<\mathbf{j}$ if and only if $i_k<j_k, \forall k\in[1,n]$. For a binary string $\mathbf{i}$ we define $N_{\mathbf{i}}\triangleq w_H(\mathbf{i})$, where $w_H$ denotes the Hamming weight. Lastly, the vector $\sim \mathbf{i}$ is the element-wise complement of $\mathbf{i}$.
\section{Correlation and the Dependency Spectrum}\label{sec:sum}
Recently, we introduced the \textit{`effective length'} of an additive Boolean function, and its extension the \textit{`Dependency spectrum'} of a general Boolean function \cite{arxiv1}. We used the dependency spectrum to bound the correlation between the outputs of functions of sequences of random variables. 
Loosely speaking, for the Boolean function $f:\{0,1\}^n\to \{0,1\}$, the dependency spectrum is a vector which characterizes the correlation between the output $f(X^n)$ with any of the subsequences of the input sequence. In this work, we make extensive use of the upper-bound in \cite{arxiv1}, as well as the mathematical apparatus developed there. Hence, we provide a concise review of the material.

Let $e:\{0,1\}^n\to \{0,1\}$ be a Boolean function. Following the method presented in \cite{ComInf2}, we convert the problem of analyzing a Boolean function into one where the encoder is a binary real-valued function. Mapping the discrete-valued function to a real-valued one is crucial since it allows us to use the rich set of tools available in functional analysis. Assume that $P(e(X^n)=1)=q$. The corresponding real function for $e$ is defined as:
\begin{align}
\tilde{e}(X^n)=    \begin{cases}
      1-q, & \qquad  e(X^n)=1, \\
      -q. & \qquad\text{otherwise}.
    \end{cases}
   % \label{eq:basis} 
\end{align}
The following gives the definition for the additive decomposition of $\tilde{e}$.
\begin{Definition}
 The vector of real functions $(\tilde{e}_{\mathbf{i}})_{\mathbf{i}\in\{0,1\}^n}$ is called the additive decomposition of $\tilde{e}$, where $\tilde{e}_{\mathbf{i}}=\mathbb{E}_{X^n|X_{\mathbf{i}}}(\tilde{e}|X_{\mathbf{i}})-\sum_{\mathbf{j}< \mathbf{i}} \tilde{e}_{\mathbf{j}}$.
\end{Definition}
\begin{Remark}
 From the above definition, it follows that $\tilde{e}=\sum_{\mathbf{i}\in \{0,1\}^n}\tilde{e}_{\mathbf{i}}$.
\end{Remark}
The following proposition describes the properties of the components of the additive decomposition:
\begin{Proposition}
\label{prop:belong2}
Define $\mathbf{P}_{\mathbf{i}}$ as the variance of $\tilde{e}_{\mathbf{i}}$. The following hold:
\\ 1) $\forall \mathbf{i}\leq \mathbf{k}$, we have $\mathbb{E}_{X^n|X_{\mathbf{j}}}(\tilde{e}_{\mathbf{i}}|X_{\mathbf{k}})=\tilde{e}_{\mathbf{i}}$.\\
 2) $\mathbb{E}_{X^n}(\tilde{e}_{\mathbf{i}}\tilde{e}_{\mathbf{k}})=0$, for $\mathbf{i}\neq \mathbf{k}$.\\
 3) $\forall \mathbf{k}\leq \mathbf{i}: \mathbb{E}_{X^n|X_{\mathbf{k}}}(\tilde{e}_{\mathbf{i}}|X_{\mathbf{k}})=0.$
 \\ 4)  $\mathbf{P}_{\mathbf{i}}=\mathbb{E}(\tilde{e}_{\mathbf{i}}^2)=c_{\mathbf{i}}^2(q(1-q))^{w_{H}(\mathbf{i})}.$
\end{Proposition}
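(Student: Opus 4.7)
The plan is to establish the four items together as the standard Hoeffding/ANOVA decomposition of $\tilde{e}$ into orthogonal components adapted to the coordinate independence of $X^n$. I would prove Parts 1 and 3 by a combined strong induction on $N_{\mathbf{i}}=w_H(\mathbf{i})$, then read off Part 2 as a direct consequence, and finish with a Walsh--Fourier argument for Part 4.

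For Parts 1 and 3 the base case $\mathbf{i}=\mathbf{0}$ is immediate: $\tilde{e}_{\mathbf{0}}=\mathbb{E}(\tilde{e})=0$ since $\tilde{e}$ has been centered. For the inductive step the defining identity
\[
\tilde{e}_{\mathbf{i}}=\mathbb{E}(\tilde{e}|X_{\mathbf{i}})-\sum_{\mathbf{j}<\mathbf{i}}\tilde{e}_{\mathbf{j}}
\]
yields Part 1 immediately, since $\mathbb{E}(\tilde{e}|X_{\mathbf{i}})$ is $X_{\mathbf{i}}$-measurable and, by the induction hypothesis, each $\tilde{e}_{\mathbf{j}}$ with $\mathbf{j}<\mathbf{i}$ is a function of $X_{\mathbf{j}}\subseteq X_{\mathbf{i}}$. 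For Part 3 I would work with the natural strengthening $\mathbb{E}(\tilde{e}_{\mathbf{i}}|X_{\mathbf{k}})=0$ for every $\mathbf{k}$ with $\mathbf{i}\not\leq \mathbf{k}$, applying $\mathbb{E}(\cdot|X_{\mathbf{k}})$ to the defining identity. Coordinate independence collapses the leading term to $\mathbb{E}(\tilde{e}|X_{\mathbf{i}\wedge \mathbf{k}})$, where $\wedge$ denotes coordinate-wise AND, while the induction hypothesis gives $\mathbb{E}(\tilde{e}_{\mathbf{j}}|X_{\mathbf{k}})=\tilde{e}_{\mathbf{j}}$ when $\mathbf{j}\leq \mathbf{k}$ and $0$ otherwise. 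The surviving summands are indexed exactly by $\{\mathbf{j}\leq \mathbf{i}\wedge \mathbf{k}\}$, and reassembling the decomposition at index $\mathbf{i}\wedge \mathbf{k}$ shows $\sum_{\mathbf{j}\leq \mathbf{i}\wedge \mathbf{k}}\tilde{e}_{\mathbf{j}}=\mathbb{E}(\tilde{e}|X_{\mathbf{i}\wedge \mathbf{k}})$, which cancels the leading term.

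Part 2 then follows instantly: for $\mathbf{i}\neq \mathbf{k}$ at least one of $\mathbf{k}\not\leq \mathbf{i}$ or $\mathbf{i}\not\leq \mathbf{k}$ holds, and taking the former WLOG, Part 1 lets me pull $\tilde{e}_{\mathbf{i}}$ outside $\mathbb{E}(\cdot|X_{\mathbf{i}})$ while Part 3 annihilates the inner factor, giving $\mathbb{E}(\tilde{e}_{\mathbf{i}}\tilde{e}_{\mathbf{k}})=\mathbb{E}\bigl(\tilde{e}_{\mathbf{i}}\,\mathbb{E}(\tilde{e}_{\mathbf{k}}|X_{\mathbf{i}})\bigr)=0$. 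For Part 4 I would expand $\tilde{e}$ in the orthonormal Walsh--Fourier basis adapted to the product law of $X^n$. By Parts 1 and 3, $\tilde{e}_{\mathbf{i}}$ lies in the one-dimensional subspace spanned by the character supported on $\mathbf{i}$, so $\mathbf{P}_{\mathbf{i}}$ equals the squared Fourier coefficient $c_{\mathbf{i}}^2$ times the character's squared norm, which by independence factorizes as $\prod_{j:\,i_j=1}\mathrm{Var}(X_j)$; this equals $(q(1-q))^{w_H(\mathbf{i})}$ under the normalization convention of \cite{arxiv1}.

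The main obstacle is the joint induction for Parts 1 and 3: Part 3 must be carried in the strengthened form ``$\mathbf{i}\not\leq \mathbf{k}$'' so that every off-diagonal term in the inductive expansion is controlled, and the cancellation ultimately works only because the partial sum $\sum_{\mathbf{j}\leq \mathbf{l}}\tilde{e}_{\mathbf{j}}$ telescopes back to $\mathbb{E}(\tilde{e}|X_{\mathbf{l}})$ by the definition itself. Once that telescoping is in place, Parts 2 and 4 reduce to routine bookkeeping.
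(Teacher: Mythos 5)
Your argument is correct for parts 1)--3), but note that this paper never proves Proposition~\ref{prop:belong2}: Section~\ref{sec:sum} is an explicit review of \cite{arxiv1}, and the proposition is imported from there without proof, so there is no in-paper argument to compare against. On its own merits, your joint induction on $w_H(\mathbf{i})$ with part 3) carried in the strengthened form $\mathbb{E}(\tilde{e}_{\mathbf{i}}|X_{\mathbf{k}})=0$ whenever $\mathbf{i}\not\leq\mathbf{k}$ is exactly the standard Efron--Stein/ANOVA route and is sound: the collapse $\mathbb{E}\bigl(\mathbb{E}(\tilde{e}|X_{\mathbf{i}})\big|X_{\mathbf{k}}\bigr)=\mathbb{E}(\tilde{e}|X_{\mathbf{i}\wedge\mathbf{k}})$ is legitimate precisely because the source is memoryless, the surviving summands are those with $\mathbf{j}\leq\mathbf{i}\wedge\mathbf{k}$, and the telescoping $\sum_{\mathbf{j}\leq\mathbf{l}}\tilde{e}_{\mathbf{j}}=\mathbb{E}(\tilde{e}|X_{\mathbf{l}})$ is immediate from the definition; part 2) then follows as you state. (You are also, correctly, reading ``$\mathbf{j}<\mathbf{i}$'' as componentwise $\leq$ with $\mathbf{j}\neq\mathbf{i}$, and part 3) as excluding $\mathbf{k}=\mathbf{i}$; the literal relations in the statement are typos.) The one place that needs tightening is part 4). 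Your Fourier argument --- which does require the coordinate alphabets to be binary, as they are here, for the one-dimensionality of each layer --- shows that $\tilde{e}_{\mathbf{i}}$ is a scalar multiple of $\prod_{k:i_k=1}(X_k-p)$ with $p\triangleq P(X_k=1)$, hence $\mathbf{P}_{\mathbf{i}}$ equals a squared coefficient times $(p(1-p))^{w_H(\mathbf{i})}$. The proposition, however, is written with $q=P(e(X^n)=1)$, the \emph{output} probability, which differs from $p$ in general; so the final step ``this equals $(q(1-q))^{w_H(\mathbf{i})}$ under the normalization convention of \cite{arxiv1}'' is not automatic and should be made explicit. Since $c_{\mathbf{i}}$ is not otherwise constrained in this paper, the stated identity holds only by absorbing the factor $\bigl(p(1-p)/q(1-q)\bigr)^{w_H(\mathbf{i})}$ into $c_{\mathbf{i}}^2$ (equivalently, by adopting the specific product-form parametrization of $\tilde{e}_{\mathbf{i}}$ used in \cite{arxiv1}); once that constant is pinned down, your proof is complete.
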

The following lemma gives the formula for calculating $\mathbf{P}_{\mathbf{i}}$.
\begin{Lemma}
For arbitrary $e:\{0,1\}^n\to \{0,1\}$, let $\tilde{e}$ be the corresponding real function. The variance of each component $\tilde{e}_{\mathbf{i}}$ in the additive decomposition is given by the following recursive formula $\mathbf{P}_{\mathbf{i}} =Var(\tilde{e}_{\mathbf{i}})= \mathbb{E}_{X_{\mathbf{i}}}(\mathbb{E}_{X^n|X_{\mathbf{i}}}^2(\tilde{e}|X_{\mathbf{i}}))-\sum_{\mathbf{j}< \mathbf{i}}\mathbf{P}_{\mathbf{j}}, \forall \mathbf{i}\in \mathbb{F}_2^n$, where $\mathbf{P}_{\underline{0}}\triangleq 0$. 
\label{Lem:power}
\end{Lemma}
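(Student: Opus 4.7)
The plan is to derive the recursion by squaring the defining identity $\tilde{e}_{\mathbf{i}} = \mathbb{E}_{X^n|X_{\mathbf{i}}}(\tilde{e}|X_{\mathbf{i}}) - \sum_{\mathbf{j}<\mathbf{i}} \tilde{e}_{\mathbf{j}}$ and exploiting the orthogonality/measurability properties listed in Proposition \ref{prop:belong2}. First, I would observe that $\mathbb{E}(\tilde{e}_{\mathbf{i}})=0$ for every $\mathbf{i}\neq \underline{0}$, which follows by taking $\mathbf{k}=\underline{0}$ in part (3) of Proposition \ref{prop:belong2}, together with $\tilde{e}_{\underline{0}}=\mathbb{E}(\tilde{e})=0$. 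Consequently $\mathbf{P}_{\mathbf{i}}=\mathrm{Var}(\tilde{e}_{\mathbf{i}})=\mathbb{E}(\tilde{e}_{\mathbf{i}}^2)$, so the task reduces to computing a second moment.

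Next, set $A\triangleq \mathbb{E}_{X^n|X_{\mathbf{i}}}(\tilde{e}|X_{\mathbf{i}})$ and $B\triangleq \sum_{\mathbf{j}<\mathbf{i}} \tilde{e}_{\mathbf{j}}$, so that $\tilde{e}_{\mathbf{i}}=A-B$ and
\begin{equation*}
\mathbf{P}_{\mathbf{i}} \;=\; \mathbb{E}[A^2] - 2\,\mathbb{E}[AB] + \mathbb{E}[B^2].
\end{equation*}
The first term is immediate: $\mathbb{E}[A^2]=\mathbb{E}_{X_{\mathbf{i}}}\bigl(\mathbb{E}_{X^n|X_{\mathbf{i}}}^{2}(\tilde{e}|X_{\mathbf{i}})\bigr)$, which is exactly the leading expression in the claimed formula.

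For the remaining two terms I would exploit part (1) of Proposition \ref{prop:belong2} to note that each summand $\tilde{e}_{\mathbf{j}}$ with $\mathbf{j}<\mathbf{i}$ is a function of $X_{\mathbf{j}}\subseteq X_{\mathbf{i}}$, so $B$ is measurable with respect to $X_{\mathbf{i}}$. By the tower property, $\mathbb{E}[AB]=\mathbb{E}\bigl[B\,\mathbb{E}(\tilde{e}|X_{\mathbf{i}})\bigr]=\mathbb{E}[B\tilde{e}]$. Writing $\tilde{e}=\sum_{\mathbf{k}} \tilde{e}_{\mathbf{k}}$ and invoking the orthogonality in part (2) of Proposition \ref{prop:belong2} collapses this to $\mathbb{E}[B\tilde{e}]=\sum_{\mathbf{j}<\mathbf{i}} \mathbb{E}[\tilde{e}_{\mathbf{j}}^2]=\sum_{\mathbf{j}<\mathbf{i}} \mathbf{P}_{\mathbf{j}}$. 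The same orthogonality gives $\mathbb{E}[B^2]=\sum_{\mathbf{j}<\mathbf{i}} \mathbf{P}_{\mathbf{j}}$ directly.

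Substituting these three evaluations yields
\begin{equation*}
\mathbf{P}_{\mathbf{i}} \;=\; \mathbb{E}_{X_{\mathbf{i}}}\!\bigl(\mathbb{E}_{X^n|X_{\mathbf{i}}}^{2}(\tilde{e}|X_{\mathbf{i}})\bigr) \;-\; 2\sum_{\mathbf{j}<\mathbf{i}} \mathbf{P}_{\mathbf{j}} \;+\; \sum_{\mathbf{j}<\mathbf{i}} \mathbf{P}_{\mathbf{j}},
\end{equation*}
which is exactly the recursion claimed. The base case $\mathbf{P}_{\underline{0}}=0$ is consistent because $\tilde{e}_{\underline{0}}=\mathbb{E}(\tilde{e})=0$. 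The only delicate step is justifying that each $\tilde{e}_{\mathbf{j}}$ with $\mathbf{j}<\mathbf{i}$ is $X_{\mathbf{i}}$-measurable so the tower property may be applied; this is the single substantive use of Proposition \ref{prop:belong2}, and once granted the rest of the argument is a routine expansion of the square.
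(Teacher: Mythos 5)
Your proof is correct. Note that this conference paper does not actually prove Lemma~\ref{Lem:power} itself --- it is quoted as review material from the cited arXiv companion paper --- so there is no in-paper argument to diverge from; your derivation (zero mean of the components, expanding the square of the defining identity, tower property after noting $B=\sum_{\mathbf{j}<\mathbf{i}}\tilde{e}_{\mathbf{j}}$ is $X_{\mathbf{i}}$-measurable by part (1), and orthogonality from part (2)) is exactly the natural route and all steps check out. An equivalent, slightly more direct phrasing of the same idea is to first observe that conditioning the full decomposition gives $\mathbb{E}_{X^n|X_{\mathbf{i}}}(\tilde{e}|X_{\mathbf{i}})=\sum_{\mathbf{j}\leq\mathbf{i}}\tilde{e}_{\mathbf{j}}$ (parts (1) and (3)), and then take second moments and use orthogonality to get $\mathbb{E}_{X_{\mathbf{i}}}(\mathbb{E}^2_{X^n|X_{\mathbf{i}}}(\tilde{e}|X_{\mathbf{i}}))=\sum_{\mathbf{j}\leq\mathbf{i}}\mathbf{P}_{\mathbf{j}}$, which rearranges to the stated recursion; this avoids the cross-term bookkeeping but uses the same ingredients.
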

The dependency spectrum of the function is defined below.
\begin{Definition}[Dependency Spectrum]
 For a Boolean function $e$, the vector of variances $(P_{\mathbf{i}})_{\mathbf{i}\in \{0,1\}^n}$ is called the dependency spectrum of $e$.
\end{Definition}
 Let $(X,Y)$ be a pair of DMS's. Consider two arbitrary Boolean functions $e:\mathcal{X}^n\to \{0,1\}$ and $f:\mathcal{Y}^n\to \{0,1\}$. Let  $ q\triangleq P(e(X^n)=1)$, $r\triangleq P(f(Y^n)=1)$, and $\epsilon\triangleq P(X\neq Y)$. Also, let $\tilde{e}=\sum_{\mathbf{i}}e_{\mathbf{i}}$, and $\tilde{f}=\sum_{\mathbf{i}}f_{\mathbf{i}}$ be the additive decomposition of these functions. 
\begin{Theorem}
The following bound holds:
 \begin{align*}
&2\!\!\sqrt{\sum_{\mathbf{i}}\mathbf{P}_{\mathbf{i}}}\sqrt{\sum_{\mathbf{i}}\mathbf{Q}_{\mathbf{i}}}-2\!\!\sum_{\mathbf{i}}C_\mathbf{i}\mathbf{P}_{\mathbf{i}}^{\frac{1}{2}}\mathbf{Q}_{\mathbf{i}}^{\frac{1}{2}} 
\leq  \!\!P(e(X^n)\neq f(Y^n))
\\&\leq 1- 2\sqrt{\sum_{\mathbf{i}}\mathbf{P}_{\mathbf{i}}}\sqrt{\sum_{\mathbf{i}}\mathbf{Q}_{\mathbf{i}}}+2\sum_{\mathbf{i}}C_\mathbf{i}\mathbf{P}_{\mathbf{i}}^{\frac{1}{2}}\mathbf{Q}_{\mathbf{i}}^{\frac{1}{2}} 
,
\end{align*}
 where $C_{\mathbf{i}}\triangleq  (1-2\epsilon)^{N_\mathbf{i}}$,  and ${\tilde{e}}$ ($\tilde{f}$) is the real function corresponding to ${e}$ ($f$), and $\mathbf{P}_{\mathbf{i}}$ ($\mathbf{Q}_{\mathbf{i}}$) is the variance of $\tilde{e}_{\mathbf{i}}$ ($\tilde{f}_{\mathbf{i}}$), and finally, $N_{\mathbf{i}}\triangleq w_H(\mathbf{i})$.
 \label{th:sec3}
\end{Theorem}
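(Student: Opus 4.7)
The plan is to first translate the probability $P(e(X^n)\neq f(Y^n))$ into a statement about inner products of the real-valued versions $\tilde{e},\tilde{f}$. Writing $e=q+\tilde{e}$ and $f=r+\tilde{f}$ with $\mathbb{E}[\tilde{e}]=\mathbb{E}[\tilde{f}]=0$ and expanding gives
\begin{align*}
P(e(X^n)\neq f(Y^n))=\mathbb{E}[(e-f)^2]=q+r-2qr-2\mathbb{E}[\tilde{e}\tilde{f}].
\end{align*}
The scalar part satisfies $q+r-2qr=q(1-r)+r(1-q)\geq 2\sqrt{q(1-q)r(1-r)}$ by AM--GM, and by the orthogonality in Proposition 1(2) one has $\sum_{\mathbf{i}}\mathbf{P}_{\mathbf{i}}=\mathrm{Var}(\tilde{e})=q(1-q)$ and similarly $\sum_{\mathbf{i}}\mathbf{Q}_{\mathbf{i}}=r(1-r)$. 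The theorem therefore reduces to proving the covariance bound $|\mathbb{E}[\tilde{e}\tilde{f}]|\leq \sum_{\mathbf{i}}C_{\mathbf{i}}\sqrt{\mathbf{P}_{\mathbf{i}}\mathbf{Q}_{\mathbf{i}}}$; the lower bound of the theorem will then follow directly, and the upper bound will be recovered at the end by the substitution $f\mapsto 1-f$.

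Next I expand $\mathbb{E}[\tilde{e}\tilde{f}]=\sum_{\mathbf{i},\mathbf{j}}\mathbb{E}[\tilde{e}_{\mathbf{i}}\tilde{f}_{\mathbf{j}}]$ and show the off-diagonal terms vanish. Fix $\mathbf{i}\neq\mathbf{j}$ and, by symmetry of the roles of $e$ and $f$, assume $\mathbf{i}\not\leq\mathbf{j}$, so that $\mathbf{a}$, the coordinatewise AND of $\mathbf{i}$ and $\mathbf{j}$, satisfies $\mathbf{a}<\mathbf{i}$. Because the pairs $(X_k,Y_k)$ are iid across $k$, the coordinates $X_k$ with $k\in\mathbf{i}\setminus\mathbf{a}$ are independent of $Y_{\mathbf{j}}$; marginalising them out yields $\mathbb{E}[\tilde{e}_{\mathbf{i}}\mid Y_{\mathbf{j}}]=\mathbb{E}[\,\mathbb{E}[\tilde{e}_{\mathbf{i}}\mid X_{\mathbf{a}}]\mid Y_{\mathbf{j}}]$, and Proposition 1(3) makes the inner conditional expectation vanish because $\mathbf{a}<\mathbf{i}$. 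Combined with the $\sigma(Y_{\mathbf{j}})$-measurability of $\tilde{f}_{\mathbf{j}}$ afforded by Proposition 1(1), the tower law gives $\mathbb{E}[\tilde{e}_{\mathbf{i}}\tilde{f}_{\mathbf{j}}]=0$.

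For $\mathbf{i}=\mathbf{j}$ the functions $\tilde{e}_{\mathbf{i}},\tilde{f}_{\mathbf{i}}$ depend only on $(X_{\mathbf{i}},Y_{\mathbf{i}})$, which is an $N_{\mathbf{i}}$-letter memoryless channel of crossover $\epsilon$, and each annihilates when averaged against any strict sub-vector of its coordinates. Viewing the Markov kernel $(Kg)(y)=\mathbb{E}[g(X_{\mathbf{i}})\mid Y_{\mathbf{i}}=y]$ as the tensor product of $N_{\mathbf{i}}$ single-letter BSC kernels, whose spectrum is $\{1,\,1-2\epsilon\}$ (on constants and mean-zero functions, respectively), the annihilation and measurability constraints confine $\tilde{e}_{\mathbf{i}},\tilde{f}_{\mathbf{i}}$ to the eigenspace with eigenvalue $(1-2\epsilon)^{N_{\mathbf{i}}}=C_{\mathbf{i}}$. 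Cauchy--Schwarz then gives
\begin{align*}
|\mathbb{E}[\tilde{e}_{\mathbf{i}}\tilde{f}_{\mathbf{i}}]|=|\mathbb{E}[\tilde{f}_{\mathbf{i}}(Y_{\mathbf{i}})\,(K\tilde{e}_{\mathbf{i}})(Y_{\mathbf{i}})]|\leq C_{\mathbf{i}}\,\|\tilde{e}_{\mathbf{i}}\|_2\|\tilde{f}_{\mathbf{i}}\|_2=C_{\mathbf{i}}\sqrt{\mathbf{P}_{\mathbf{i}}\mathbf{Q}_{\mathbf{i}}}.
\end{align*}
I expect this eigenvalue-tensorisation step to be the main obstacle, being where the binary-symmetric structure of the source pair really enters, and it will draw on the $L^2$ machinery of \cite{arxiv1}. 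Summing over $\mathbf{i}$ completes the covariance bound and hence the lower bound. The upper bound then follows by applying the just-proved lower bound to the pair $(e,1-f)$: since $\widetilde{1-f}=-\tilde{f}$, all variances $\mathbf{Q}_{\mathbf{i}}$ and the product $(1-r)r$ are unchanged, so the same lower bound holds with $P(e(X^n)\neq f(Y^n))$ replaced by $P(e(X^n)\neq 1-f(Y^n))=1-P(e(X^n)\neq f(Y^n))$, which rearranges into the claimed upper bound.
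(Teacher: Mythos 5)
You cannot be checked against an in-paper argument here: the paper states Theorem~\ref{th:sec3} as a citation of \cite{arxiv1} and gives no proof of it, so what follows is an assessment of your proposal on its own terms. Most of your skeleton is sound and is the natural argument behind the dependency-spectrum machinery: the identity $P(e\neq f)=q+r-2qr-2\mathbb{E}[\tilde e\tilde f]$, the AM--GM step combined with $\sum_{\mathbf{i}}\mathbf{P}_{\mathbf{i}}=q(1-q)$ and $\sum_{\mathbf{i}}\mathbf{Q}_{\mathbf{i}}=r(1-r)$, the vanishing of the cross terms $\mathbb{E}[\tilde e_{\mathbf{i}}\tilde f_{\mathbf{j}}]$ for $\mathbf{i}\neq\mathbf{j}$ via memorylessness and Proposition~\ref{prop:belong2}, and the recovery of the upper bound by applying the lower bound to $(e,1-f)$ are all correct.

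The genuine gap is precisely the step you yourself flag as the main obstacle: the claim that the per-letter kernel $(Kg)(y)=\mathbb{E}[g(X)\mid Y=y]$ has ``spectrum $\{1,1-2\epsilon\}$,'' hence $|\mathbb{E}[\tilde e_{\mathbf{i}}\tilde f_{\mathbf{i}}]|\le (1-2\epsilon)^{N_{\mathbf{i}}}\sqrt{\mathbf{P}_{\mathbf{i}}\mathbf{Q}_{\mathbf{i}}}$. First, this is really a statement about singular values of the correlation operator between $L^2(X_{\mathbf{i}})$ and $L^2(Y_{\mathbf{i}})$, not eigenvalues of one kernel; what rescues the binary case is that each per-letter mean-zero subspace is one-dimensional, so $\tilde e_{\mathbf{i}}$ and $\tilde f_{\mathbf{i}}$ are proportional to products of centered single-bit functions, and one must check how $K$ acts on these coordinatewise. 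Second, and more seriously, the bound ``second singular value $\le 1-2\epsilon$'' does not follow from the stated hypothesis $P(X\neq Y)=\epsilon$ alone: for a general binary pair the second singular value is the maximal (Pearson) correlation, and the example $P(X{=}1,Y{=}1)=0.5$, $P(X{=}1,Y{=}0)=0.2$, $P(X{=}0,Y{=}1)=0$, $P(X{=}0,Y{=}0)=0.3$ has $\epsilon=0.2$ but $\rho\approx 0.65>1-2\epsilon=0.6$; already for $n=1$ with $e(X)=X$, $f(Y)=Y$ your intermediate covariance bound $|\mathbb{E}[\tilde e\tilde f]|\le\sum_{\mathbf{i}}C_{\mathbf{i}}\sqrt{\mathbf{P}_{\mathbf{i}}\mathbf{Q}_{\mathbf{i}}}$ fails ($0.15>0.1375$), even though the theorem's displayed bound happens to survive there because of the AM--GM slack. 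In other words, your reduction proves a statement strictly stronger than the theorem, and that stronger statement is false at the stated level of generality, so the decisive per-letter estimate cannot be ``drawn from the $L^2$ machinery'' without an extra structural assumption. It does hold in the additive-noise model $Y=X\oplus N$, $N\sim\mathrm{Bern}(\epsilon)$ independent of $X$ (the setting underlying \cite{arxiv1}): there a direct computation gives $\mathbb{E}[(X-p)\mid Y]=\lambda\,(Y-r)$ with $\lambda\sqrt{r(1-r)}=(1-2\epsilon)\sqrt{p(1-p)}\cdot\sqrt{p(1-p)/(r(1-r))}\le(1-2\epsilon)\sqrt{p(1-p)}$, and tensorizing over the coordinates of $\mathbf{i}$ followed by Cauchy--Schwarz yields exactly your diagonal bound. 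So to close the gap you must either import that assumption explicitly (uniform/symmetric or additive independent noise) and prove the per-letter singular-value inequality under it, or accept that the inequality you assert is not a consequence of ``$P(X\neq Y)=\epsilon$'' by itself.
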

\section{Correlation Preservation in Single-letter BBCs}\label{sec:Cor}
In this section, we provide the definition of a `coding scheme', and identify three properties which are shared among many of the coding schemes available for multi-terminal communications. We call the group of coding schemes which share these properties the Single-letter Random Coding Schemes (SLCS). These coding schemes include Shannon's PtP source coding scheme, the Berger-Tung coding scheme for distributed source coding \cite{Markov}, and the Zhang-Berger multiple-descriptions coding scheme \cite{ZB}. We use the results in the previous section to bound the correlation between the outputs of two encoding functions produced using a SLCS. The proof involves several steps. First, it is shown that SLCSs produce encoding functions which have most of their variance either on the single-letter components of their dependency spectrum or on the components with asymptotically large blocklengths. This along with Theorem \ref{th:sec3} are used to prove that such schemes are inefficient in preserving correlation. The following defines a coding scheme:
\begin{Definition}
 A Coding Scheme $\mathscr{S}$ is characterized by a probability measure $P_{\mathscr{S}}(\underline{e}_1,\underline{e}_2,\cdots,\underline{e}_t)P_{\mathscr{S}}(\underline{d}_1,\underline{d}_2,\cdots,\underline{d}_r|\underline{e}_1,\underline{e}_2,\cdots,\underline{e}_t)$ on the set of encoding  functions $\underline{e}_k, k\in [1,t]$, and decoding functions $\underline{d}_k, k\in [1,r]$.
\end{Definition}
For example, Shannon's point-to-point coding schemes determine $P_{\mathscr{S}}(\underline{e})P_{\mathscr{S}}(\underline{d}|\underline{e})$ by using a single-letter distribution to assign probabilities to codebooks, and then using typicality encoding and decoding rules to determine the encoding and decoding functions.  Whenever the choice of the coding scheme is clear, we denote the distribution by $P_{\underline{E}}(\underline{e})P_{\underline{D}|\underline{E}}(\underline{d}|\underline{e})$. The following defines the SLCSs:
\begin{Definition} \label{def:SLCS}
 The coding scheme characterized by $P_{\mathscr{S}}$ is called an SLCS if the following constraints are satisfied. For an arbitrary $k\in [1,t]$, let $\underline{E}=\underline{E}_k$, then

 1) $\forall x^n, \exists \delta_X>0, B_n(x^n)\subset \mathcal{X}^n$ { such that } \\$P_{X^n}(X^n\in B_n(x^n))\leq 2^{-n\delta_{X}}$, and $\forall y^n\notin B_n(x^n)$: $ \underline{E}(x^n) \indep \underline{E}(y^n)$ .
 
 %3) $P_{\mathscr{S},X^n}(\underline{E}(X^n)\in A_{\epsilon}^n(U))\to 1$ as $n\to \infty$. 
 
 2) $ \forall \delta>0, \exists n\in \mathbb{N}\ni \forall m>n, \forall x^m\in \{0,1\}^m , v \in \{0,1\}, \forall i\in [1,m]$:
 \\ $|P_{\mathscr{S}}(E_i(X^m)=v|X^m=x^m)-P_{\mathscr{S}}(E_i(X^m)=v|X_i=x_i)|<\epsilon$.

3) $\forall \pi \in S_n: P_{\underline{E}}(\underline{E})=P_{\underline{E}}(\underline{E}_\pi)$, where $\underline{E}_\pi(X^n)=\pi^{-1}(\underline{E}(\pi(X^n)))$, where $S_n$ is the symmetric group of length n.

\end{Definition}

%\begin{Remark}
% The second condition is true in the case of typicality encoding since if $x^n$ and ${x'}^n$ are far enough, then they are mapped to different codewords with probability one. As long as these codewords are chosen pairwise independent, the condition holds.
%\end{Remark}
%\begin{Remark}
\begin{Remark}
 The first condition can be interpreted as follows: for an arbitrary sequence $x^n$, let $B_n(x^n)$ be the set of sequences $y^n$, such that the set of encoding functions which map $x^n$ and $y^n$ to the same sequence has non-zero probability with respect to the measure $P_{\mathscr{S}} $ (e.g. in typicality encoding this requires $x^n$ and $y^n$ to be jointly typical based on some fixed distribution). Then, the condition requires that the probability of the set $B_n(x^n)$ goes to $0$ exponentially as $n\to \infty$. Furthermore, if $y^n\notin B_n(x^n)$, then it is mapped to a codeword which is chosen independent of $\underline{E}(x^n)$ (i.e. codewords are chosen pairwise independently). 
 \end{Remark}
 \begin{Remark}
The second condition can be interpreted as follows: the joint distribution of the input sequence and the output sequence of the encoding function averaged over all possible encoding functions approaches a product distribution in variational distance as $n\to \infty$.
\end{Remark}
\begin{Remark}
The explanation for the third condition is that the probability that a vector $x^n$ is mapped to $y^n$ depends only on their joint type and is equal to the probability that the permuted sequence $\pi(x^n)$ is mapped to $\pi(y^n)$. As an example, this condition holds in typicality encoding. 
\end{Remark}
In the next example we show that the Shannon's coding scheme for point-to-point source coding satisfy the above conditions.

\begin{Example}
 \begin{figure}[!t]
\centering
\includegraphics[width=\columnwidth]{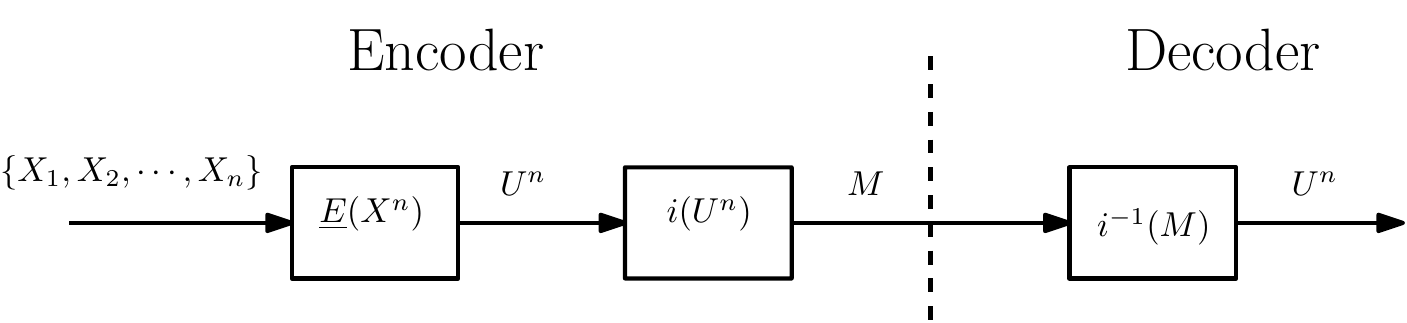}
% where an .eps filename suffix will be assumed under latex, 
% and a .pdf suffix will be assumed for pdflatex; or what has been declared
% via \DeclareGraphicsExtensions.
\caption{Point-to-point source coding example}
\label{fig:PtPS}
\end{figure}
Consider the PtP source coding problem depicted in Figure \ref{fig:PtPS}. A discrete memoryless source $X$ is being fed to an encoder. The encoder utilizes the mapping $\underline{E}:\mathcal{X}^n\to \mathcal{U}^n$ to compress the source sequence. The image of $\underline{E}$ is indexed by the bijection $i:\textit{Im}(\underline{E})\to [1,|\textit{Im}(\underline{Q})|]$. The index $M\triangleq{i(E(X^n))}$ is sent to the decoder. The decoder reconstructs the compressed sequence $U^n\triangleq i^{-1}(M)=\underline{E}(X^n)$.  The efficiency of the reconstruction is evaluated based on the separable distortion criteria $d_n:\mathcal{X}^n\times\mathcal{U}^n\to [0,\infty)$. The separability property means that $d_n(x^n,u^n)=\sum_{i\in[1,n]}d_1(x_i,u_i)$. We assume that the alphabets $\mathcal{X}$ and $\mathcal{U}$ are both binary. The rate of transmission is defined as $R\triangleq \frac{1}{n}\log{|\textit{Im}(\underline{E})|}$, and the average distortion is defined as $\frac{1}{n}\mathbb{E}(d_n(X^n, U^n))$. The goal is to choose $\underline{E}$ such that the rate-distortion tradeoff is optimized. Note that the choice of the bijection `$i$' is irrelevant to the performance of the system. The following Lemma gives the achievable RD region for this setup.  
\begin{Lemma}\cite{Shannon}
 For the source $X$ and distortion criteria $d_1:\{0,1\}\times\{0,1\}\to [0,\infty)$, fix a conditional distribution  $p_{U|X}(u|x),x,u\in \{0,1\}$. 
 The rate-distortion pair $(R,D)=\left(r, \mathbb{E}_{X,U}(d_1(X,U)\right)) $ is achievable for all $r<I(U;X)$. 
\end{Lemma}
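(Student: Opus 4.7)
The plan is to prove this by Shannon's classical random-coding argument, with an i.i.d.\ codebook drawn from the $U$-marginal and joint-typicality encoding. I would carry it out in four steps.

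First, I would generate the codebook. Let $p_U$ denote the marginal induced by $p_X p_{U|X}$. Draw $2^{nR}$ candidate codewords $u^n(m)$, $m\in\{1,\ldots,2^{nR}\}$, independently, each i.i.d.\ according to $p_U$. Define the encoder $\underline{E}$ as follows: given $x^n$, search for some index $m$ with $(x^n,u^n(m))$ in the strongly jointly typical set $A_\epsilon^{(n)}(X,U)$; if such $m$ exists pick the smallest and set $\underline{E}(x^n)=u^n(m)$, otherwise assign a fixed default codeword. The decoder receives $m$ and outputs $u^n(m)$; this matches the setup in Figure \ref{fig:PtPS}.

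Second, I would bound the probability of encoding failure averaged over the random codebook. For each fixed $x^n$ that is typical under $p_X$, an independently drawn $u^n\sim \prod_i p_U$ satisfies $\Pr\{(x^n,u^n)\in A_\epsilon^{(n)}\}\geq 2^{-n(I(X;U)+\eta(\epsilon))}$ for all $n$ sufficiently large, with $\eta(\epsilon)\to 0$ as $\epsilon\to 0$. Hence the probability that \emph{none} of the $2^{nR}$ candidates is jointly typical with $x^n$ is at most $\left(1-2^{-n(I(X;U)+\eta)}\right)^{2^{nR}}$, which decays doubly-exponentially whenever $R>I(U;X)+\eta$. Combining with the vanishing probability that $X^n$ itself is atypical (AEP), the overall failure probability tends to $0$.

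Third, I would convert this into a distortion bound. On the success event, strong joint typicality of $(X^n,\underline{E}(X^n))$ together with the typical average lemma give $\tfrac{1}{n}d_n(X^n,\underline{E}(X^n))\leq \mathbb{E}[d_1(X,U)]+\epsilon'$, while on the complementary event the per-symbol distortion is bounded by $d_{\max}\triangleq\max_{x,u}d_1(x,u)<\infty$, contributing at most $d_{\max}\cdot o(1)$ to the expected average distortion. A straightforward derandomization (at least one deterministic codebook performs at least as well as the average) then yields an actual sequence of codes with rate at most $R$ and average distortion at most $\mathbb{E}[d_1(X,U)]+o(1)$. Letting $\epsilon\downarrow 0$ and $R\downarrow I(U;X)$ completes the achievability.

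The delicate point is the step-two typicality estimate: one must invoke strong (letter-frequency) typicality, available since $\mathcal{X},\mathcal{U}$ are binary, to get the sharp $2^{-nI(X;U)}$ scaling of the joint-typicality probability. Everything else—codebook generation, union bounds, the typical average lemma, and derandomization—is mechanical. Notably, the encoding rule here also satisfies the three axioms of Definition \ref{def:SLCS}, so this example serves both to prove the lemma and to justify calling Shannon's scheme a Single-letter Random Coding Scheme.
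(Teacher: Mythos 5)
Your proof is correct and follows essentially the same route as the paper: Shannon's random codebook with joint-typicality encoding, where your i.i.d.\ $\prod_i p_U$ codebook is exactly the alternative generation method the paper's own remark declares interchangeable with its uniform choice over $A_{\epsilon}^n(U)$, and you simply fill in the covering-lemma, distortion, and derandomization details that the paper leaves as ``it can be shown.'' Note only that you (correctly) establish achievability for rates approaching $I(U;X)$ from above, whereas the lemma as printed says ``for all $r<I(U;X)$,'' which is evidently a typo in the paper rather than a gap in your argument.
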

\begin{proof}
In order to verify the properties of the SLCS's in the coding scheme used in this problem, we give an outline of the scheme. Fix $n\in \mathbb{N}$, and $\epsilon>0$. Define $P_U(u)=\mathbb{E}_{X}\{P_{U|X}(u|X))\}$. Proving achievability is equivalent to showing the existence of a suitable encoding function $\underline{E}(X^n)$. In \cite{Shannon}, a randomly generated encoding function is constructed with the aid of a set of vectors called the codebook, and an assignment rule called typicality encoding. We construct the codebook $\mathcal{C}$ as follows.  Let ${A}_{\epsilon}^n(U)\triangleq \{u^n\big||\frac{1}{n}w_{H}(u^n)-P_{U}(1)|<\epsilon\}$ be the set of $n$-length binary vectors which are $\epsilon$-typical with respect to $P_U$. Choose $\ceil{2^{nR}}$ vectors from ${A}_{\epsilon}^n(U)$ randomly and uniformly. Let $\mathcal{C}\subset {A}_{\epsilon}^n(U)$ be the set of these vectors. The encoder constructs the encoding function $\underline{E}(X^n)$ as follows.  For an arbitrary sequence $x^n\in \{0,1\}^n$, define $A_{\epsilon}^n(U|x^n)$ as the set of vectors in $\mathcal{C}$ which are jointly $\epsilon$-typical with $x^n$ based on $P_{U|X}$. The vector $\underline{E}(x^n)$ is chosen randomly and uniformly from $A_{\epsilon}^n(U|x^n)\cap \mathcal{C}$. The probabilistic choice of the codewords as well as the quantization, puts a distribution on the random function $\underline{E}$. It can be shown that as $n$ becomes larger codes produced based on this distribution $P(\underline{Q})$ achieve the rate-distortion vector (R,D) with probability approaching one. 
 \end{proof}
 \label{Ex:PtPS}
\end{Example}
\begin{Remark}
 It is well-known that in the above scheme, the codebook generation process could be altered in the following way. Instead of choosing the codewords randomly and uniformly from the set of typical sequences ${A}_{\epsilon}^n(U)$, the encoder can produce each codeword independent of the others and with the distribution $P_{U^n}(u^n)=\Pi_{i\in[1,n]}P_{U}(u_i)$. However, the discussion that follows remains unchanged regardless of which of these codebook generation methods are used.  
\end{Remark}

1) Codewords are chosen pairwise independently. So,  for two input sequences $x^n$, and $y^n$ given that $A_{\epsilon}^n(U|x^n)\cap A_{\epsilon}^n(V|y^n)=\phi$, the two vectors are not mapped to the same codeword, hence they are mapped to independently generated codewords (i.e. $\underline{E}(x^n)$ is chosen independently of $\underline{E}(y^n)$).

%2) The codewords are chosen from a typical set with respect to a fixed distribution $P_U$.

2) As $n$ becomes large, the $i$th output element $E_i(X^n)$ is correlated with the input sequence $X^n$ only through the $i$th input element $X_i$:
\begin{align*}
 &\forall \delta>0, \exists n\in \mathbb{N}: m>n
 \Rightarrow \forall x^m\in \{0,1\}^m , v \in \{0,1\},
 \\& |P_{\mathscr{S}}(Q_i(X^m)=v|X^m=x^m)-P_{\mathscr{S}}(Q_i(X^m)=v|X_i=x_i)|<\delta. 
\end{align*}
\begin{proof}
 For a fixed quantization function $\underline{e}:\{0,1\}^m\to\{0,1\}^m$, $\underline{e}(X^m)$ is a function of $X^m$. However,
 without the knowledge that which encoding function is used, $E_i(X^m)$ is related to $X^m$ only through $X_i$. In other words, averaged over all encoding functions, the effects of the rest of the elements diminishes. We provide a proof of this statement below:
 
First, we are required to provide some definitions relating to the joint type of pairs of sequences. For binary strings $u^m, x^m$, define $N(a,b|u^m,x^m)\triangleq
|\{j|u_j=a, x_j=b\}|$, that is the number of indices $j$ for which the value of the pair $(u_j,x_j)$ is $(a,b)$. For $s,t\in \{0,1\}$, define $l_{s,t}\triangleq N(s,t|u^m,x^m)$, the vector $(l_{0,0}, l_{0,1}, l_{1,0}, l_{1,1})$ is called the joint type of $(u^m,x^m)$. For fixed $x^m$ The set of sequences $T(l_{0,0}, l_{0,1}, l_{1,0}, l_{1,1})=\{u^m| N(s,t|u^m,x^m)=l_{s,t}, s,t\in\{0,1\}\}$, is the set of vectors which have joint type $(l_{0,0}, l_{0,1}, l_{1,0}, l_{1,1})$ with the sequence $x^m$. Fix $m, \epsilon>0$, and define $\mathcal{L}_{\epsilon,n}\triangleq \{(l_{0,0}, l_{0,1}, l_{1,0}, l_{1,1})|  |\frac{l_{s,t}}{m}-P_{U,X}(s,t)|<\epsilon\}$. Then for the conditional typical set $A_{\epsilon}^n(U|x^m)$ defined above we can write
\begin{align*}
 A_{\epsilon}^n(U|x^m)=\cup_{(l_{0,0}, l_{0,1}, l_{1,0}, l_{1,1})\in \mathcal{L}_{\epsilon,n}} T(l_{0,0}, l_{0,1}, l_{1,0}, l_{1,1}).
\end{align*}
The type of $x^m$, denoted by $(l_0,l_1)$ is defined in a similar manner.
 Since $E_i(X^m)$ are chosen uniformly from the set $A_{\epsilon}^n(U|x^m)$, we have:
\begin{align*}
&P_{\mathscr{S}}(E_i(X^m)=v|X^m=x^m)=\frac{|\{u^m| u_1=v, u^m\in  A_{\epsilon}^m(U|x^m)\}|}{|\{u^m| u^m\in  A_{\epsilon}^m(U|x^m)\}|} 
\\&=\frac{\sum_{(l_{0,0}, l_{0,1}, l_{1,0}, l_{1,1})\in  \mathcal{L}_{\epsilon,n}} |\{u^m| u_1=v, u^m\in  T(l_{0,0}, l_{0,1}, l_{1,0}, l_{1,1})\}|}{\sum_{(l_{0,0}, l_{0,1}, l_{1,0}, l_{1,1})\in  \mathcal{L}_{\epsilon,n}} |\{u^m|  u^m\in  T(l_{0,0}, l_{0,1}, l_{1,0}, l_{1,1})\}|}\\
&=\frac{\sum_{(l_{0,0}, l_{0,1}, l_{1,0}, l_{1,1})\in  \mathcal{L}_{\epsilon,n}} {{l_{x_1}-1} \choose {l_{u_1,x_1}-1} }{{l_{\bar{x}_1}} \choose {l_{u_1,\bar{x}_1}} }}{\sum_{(l_{0,0}, l_{0,1}, l_{1,0}, l_{1,1})\in  \mathcal{L}_{\epsilon,n}}{{l_{x_1}} \choose {l_{u_1,x_1}} }{{l_{\bar{x}_1}} \choose {l_{u_1,\bar{x}_1}} }}
\\&=\frac{\sum_{(l_{0,0}, l_{0,1}, l_{1,0}, l_{1,1})\in  \mathcal{L}_{\epsilon,n}} \frac{(l_{x_1}-1)!}{
(l_{u_1,x_1}-1)!(l_{x_1}-l_{u_1,x_1})!}
 \frac{l_{\bar{x}_1}!}{
l_{u_1,\bar{x}_1}!(l_{\bar{x}_1}-l_{u_1,\bar{x}_1})!}}
{\sum_{(l_{0,0}, l_{0,1}, l_{1,0}, l_{1,1})\in  \mathcal{L}_{\epsilon,n}} \frac{l_{x_1}!}{
l_{u_1,x_1}!(l_{x_1}-l_{u_1,x_1})!}
 \frac{l_{\bar{x}_1}!}{
l_{u_1,\bar{x}_1}!(l_{\bar{x}_1}-l_{u_1,\bar{x}_1})!}}
\\&\stackrel{(a)}{=}\frac{\sum_{(l_{0,0}, l_{0,1}, l_{1,0}, l_{1,1})\in  \mathcal{L}_{\epsilon,n}} l_{u_1,x_1} \frac{1}{
l_{u_1,x_1}!(l_{x_1}-l_{u_1,x_1})!}
 \frac{1}{
l_{u_1,\bar{x}_1}!(l_{\bar{x}_1}-l_{u_1,\bar{x}_1})!}}
{l_{x_1}\sum_{(l_{0,0}, l_{0,1}, l_{1,0}, l_{1,1})\in  \mathcal{L}_{\epsilon,n}} \frac{1}{
l_{u_1,x_1}!(l_{x_1}-l_{u_1,x_1})!}
 \frac{1}{
l_{u_1,\bar{x}_1}!(l_{\bar{x}_1}-l_{u_1,\bar{x}_1})!}}
\\&\stackrel{(b)}{\Rightarrow} \frac{P_{U,X}(u_1,x_1)-\epsilon}{P_X(x_1)+\epsilon} \leq P_{\mathscr{S}}(Q_i(X^m)=v|X^m=x^m) \leq  \frac{P_{U,X}(u_1,x_1)+\epsilon}{P_X(x_1)-\epsilon}
\\& \Rightarrow \exists m,\epsilon>0: |P_{\mathscr{S}}(Q_i(X^m)=v|X^m=x^m)-P_{U|X}(u_1|x_1)|\leq \delta.
\end{align*}
In (a), we use the fact that for fixed $x^m$, $(l_{x_1},l_{\bar{x}_1})$ is fixed to simplify the numerators. In (b) we have used that for jointly typical $\epsilon$-sequences $(u^m,x^m)$, $l_{u_1,x_1}\in [n(P_{U,X}(u_1,x_1)-\epsilon), n(P_{U,X}(u_1,x_1)+\epsilon)]$, and $l_{x_1}\in  [n(P_{X}(x_1)-\epsilon), n(P_{X}(x_1)+\epsilon)]$.

\end{proof}

3) The encoder is insensitive to permutations. Due to typicality encoding the probability that a vector $x^n$ is mapped to $y^n$ depends only on their joint type and is equal to the probability that $\pi(x^n)$ is mapped to $\pi(y^n)$.

Our goal is to analyze the correlation preserving properties of SLCS's. More precisely, we investigate two encoding functions generated using a SLCS, and bound the correlation between the outputs of these two functions. For a randomly generated encoding function $\underline{E}= (E_1,E_2,\cdots,E_n)$, denote the additive decomposition of the real function corresponding to the $k$th element as $\tilde{E}_k= \sum_{\mathbf{i}}\tilde{E}_{k, \mathbf{i}}, k\in [1,n]$. Let $\mathbf{P}_{j,\mathbf{i}}$ be the variance of $\tilde{E}_{k, \mathbf{i}}$. The next theorem shows that SLCSs produce encoding functions which have most of their variance either on the single-letter components of their dependency spectrum or on the components with asymptotically large blocklengths. 
\begin{Theorem}
 For any $k\in [1,n], m\in \mathbb{N}, \gamma>0$, $P_{\mathscr{S}}(\sum_{{\mathbf{i}}:N_{\mathbf{i}}\leq m, \mathbf{i}\neq \mathbf{i}_k} \mathbf{P}_{k,{\mathbf{i}}}\geq \gamma)\to 0$, as $n\to \infty$. Where, $\mathbf{i}_k$ is the $k$th standard basis element.
 \label{th:sec4}
\end{Theorem}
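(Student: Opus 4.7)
The plan is to apply Markov's inequality, reducing the claim to showing $\mathbb{E}_{\mathscr{S}}\!\bigl[\sum_{\mathbf{i}:N_{\mathbf{i}}\le m,\,\mathbf{i}\ne\mathbf{i}_k}\mathbf{P}_{k,\mathbf{i}}\bigr]\to 0$ as $n\to\infty$. The orthogonality of the additive decomposition (Proposition~\ref{prop:belong2}) yields, for every $\mathbf{t}\in\{0,1\}^n$, the identity
\begin{equation*}
\sum_{\mathbf{j}\le \mathbf{t}}\mathbf{P}_{k,\mathbf{j}} \;=\; \mathbb{E}_{X_{\mathbf{t}}}\!\bigl[\mathbb{E}^2(\tilde{E}_k|X_{\mathbf{t}})\bigr] \;=\; \mathbb{E}_{X^n,Y^n:X_{\mathbf{t}}=Y_{\mathbf{t}}}\!\bigl[\tilde{E}_k(X^n)\,\tilde{E}_k(Y^n)\bigr],
\end{equation*}
where $X^n$ and $Y^n$ are conditionally i.i.d.\ copies of the source given $X_{\mathbf{t}}=Y_{\mathbf{t}}$. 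This turns the problem into estimating an $\mathbb{E}_{\mathscr{S}}$-average of a pairwise encoder product.

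Swapping the two expectations, condition~1 of Definition~\ref{def:SLCS} ensures that except on a $P_{X^n,Y^n}$-set of probability $O(2^{-n\delta_X})$ one has $Y^n\notin B_n(X^n)$, so that $E_k(X^n)\indep E_k(Y^n)$ under $P_{\mathscr{S}}$ and $\mathbb{E}_{\mathscr{S}}[E_k(X^n)E_k(Y^n)]=g_k(X^n)\,g_k(Y^n)$ with $g_k(x^n):=P_{\mathscr{S}}(E_k(x^n)=1)$. Condition~2 then replaces each $g_k(x^n)$ by $h_k(x_k):=P_{\mathscr{S}}(E_k=1|X_k=x_k)$ up to a uniform error $\delta_n\to 0$, and a parallel computation with unconditionally independent $X^n,Y^n$ yields $\mathbb{E}_{\mathscr{S}}[q_k^2]=(\mathbb{E}[h_k])^2+O(\delta_n+2^{-n\delta_X})$. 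Assembling the pieces gives, for every $\mathbf{t}$ with $N_{\mathbf{t}}\le m$,
\begin{equation*}
\mathbb{E}_{\mathscr{S}}\!\left[\sum_{\mathbf{j}\le \mathbf{t}}\mathbf{P}_{k,\mathbf{j}}\right] \;=\; \mathrm{Var}\bigl(h_k(X_k)\bigr)\cdot\mathbf{1}_{k\in\mathrm{supp}(\mathbf{t})}\;+\;O\!\bigl(\delta_n+2^{-n\delta_X}\bigr),
\end{equation*}
since $X_k$ is independent of $X_{\mathbf{t}}$ when $k\notin\mathrm{supp}(\mathbf{t})$. Specializing to $\mathbf{t}=\mathbf{i}_k$ gives $\mathbb{E}_{\mathscr{S}}[\mathbf{P}_{k,\mathbf{i}_k}]\to\mathrm{Var}(h_k)$, so subtracting off the $\mathbf{P}_{k,\mathbf{i}_k}$-term cancels the leading $\mathrm{Var}(h_k)$ when $k\in\mathrm{supp}(\mathbf{t})$, producing the uniform bound $\mathbb{E}_{\mathscr{S}}\!\bigl[\sum_{\mathbf{j}\le \mathbf{t},\,\mathbf{j}\ne\mathbf{i}_k}\mathbf{P}_{k,\mathbf{j}}\bigr]=O(\delta_n+2^{-n\delta_X})$.

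To assemble the full target sum I would invoke permutation invariance (condition~3), which forces $\mathbb{E}_{\mathscr{S}}[\mathbf{P}_{k,\mathbf{i}}]$ to depend on $\mathbf{i}$ only through $s=N_{\mathbf{i}}$ and the indicator $\mathbf{1}_{k\in\mathrm{supp}(\mathbf{i})}$. Inducting on $s$ from $1$ to $m$ and applying the previous display to weight-$s$ vectors $\mathbf{t}$, each of these common values is extracted as an $O(\delta_n+2^{-n\delta_X})$ quantity whenever $\mathbf{i}\ne\mathbf{i}_k$.

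The main obstacle is the combinatorial counting in this last step: the target sum runs over $\Theta(n^m)$ distinct multi-indices, so the per-index bound multiplied by the count yields only $O\bigl(n^m(\delta_n+2^{-n\delta_X})\bigr)$. This tends to zero only if the $o(1)$ rates are quantified sufficiently. Condition~1 already supplies an exponential rate, while for the concrete SLCSs in the paper (Shannon, Berger--Tung, Zhang--Berger) the typicality parameter can be tuned so that $\delta_n$ decays faster than any inverse polynomial in $n$. Careful bookkeeping of these rates is the delicate step that closes the proof before Markov's inequality finishes the argument.
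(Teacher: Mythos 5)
Your outline reproduces the paper's main ingredients up to the last step: Markov's inequality, the identity $\sum_{\mathbf{j}\le\mathbf{t}}\mathbf{P}_{k,\mathbf{j}}=\mathbb{E}_{X_{\mathbf{t}}}\bigl[\mathbb{E}^2(\tilde{E}_k|X_{\mathbf{t}})\bigr]$, pairwise independence (condition 1) to pull the encoder average inside the square at exponentially small cost, and condition 2 to compare the averaged encoder with a single-letter function of $x_k$. The genuine gap is exactly the assembly step you flag at the end, and flagging it does not repair it. Summing per-index estimates over the $\Theta(n^m)$ indices of weight at most $m$ leaves you with $O\bigl(n^m(\delta_n+2^{-n\delta_X})\bigr)$, and condition 2 of Definition \ref{def:SLCS} supplies no rate for $\delta_n$: it only guarantees that for every $\delta>0$ there is some $n$ beyond which the gap is below $\delta$. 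Hence $n^m\delta_n$ need not vanish, and your proposed remedy --- tuning the typicality parameters of the concrete schemes (Shannon, Berger--Tung, Zhang--Berger) so that $\delta_n$ decays super-polynomially --- proves a statement about those particular schemes, not Theorem \ref{th:sec4}, which quantifies over every scheme satisfying Definition \ref{def:SLCS}.

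The missing idea, which is how the paper closes the argument, is to never pay the condition-2 error per index. After reducing to $k=1$ by permutation invariance (that is the only role of condition 3; your weight-class sorting is harmless but unnecessary), pass to the averaged encoder $\bar{E}(x^n)\triangleq\mathbb{E}_{\mathscr{S}}[\tilde{E}_1(x^n)]$ with decomposition variances $\bar{P}_{\mathbf{i}}\ge 0$. Condition 1 is the only approximation applied index by index, and its exponentially small cost survives multiplication by the polynomial index count. Then, instead of evaluating each $\bar{P}_{\mathbf{i}}$ separately, use nonnegativity to bound $\sum_{\mathbf{i}:N_{\mathbf{i}}\le m,\,\mathbf{i}\ne\mathbf{i}_1}\bar{P}_{\mathbf{i}}\le\sum_{\mathbf{i}\in\{0,1\}^n}\bar{P}_{\mathbf{i}}-\bar{P}_{\mathbf{i}_1}=\mathbb{E}_{X^n}\bigl[\mathbb{E}^2_{\mathscr{S}}(\tilde{E}_1|X^n)\bigr]-\bar{P}_{\mathbf{i}_1}$, and apply condition 2 exactly once: the uniform closeness of $\mathbb{E}_{\mathscr{S}}(\tilde{E}_1|X^n=x^n)$ to a function of $x_1$ alone bounds this difference by $O(\epsilon)$, independently of how many indices the sum contains. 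This telescoping, total-variance step is what lets the theorem hold under the bare $o(1)$ hypothesis of the definition; per-index bookkeeping, however carefully quantified, cannot get you there.
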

\begin{IEEEproof}
 Please refer to the Appendix.
\end{IEEEproof}

 Theorem \ref{th:sec4} shows that SLCS's distribute most of the variance of $\tilde{E}_k$ on $ \tilde{E}_{k, \mathbf{i}}$'s  which operate on asymptotically large blocks of the input, and on the single-letter component $\tilde{E}_{k, \mathbf{i}_k}$. Hence, the encoders generated using such schemes have high expected variance for decomposition elements with large effective lengths. This along with Theorem \ref{th:sec3} gives an upper bound on the correlation preserving properties of SLCS's. The following theorem states this upper bound which is the main result of this section:
\begin{Theorem}
Let $(X,Y)$ be a pair of DMS's, with $P(X=Y)=1-\epsilon, \epsilon>0$. Also, assume that the pair $\underline{E},\underline{F}:\{0,1\}^n\to \{0,1\}$ are produced using a SLCS characterized by $P_{\mathscr{S}}$. Define $E\triangleq E_{1}$, and $F\triangleq F_1$.
Then, $\forall \delta>0$:
\begin{align*}
 &P_{\mathscr{S}}\left(P_{X^n,Y^n}\left(E(X^n)\neq F(Y^n)\right)\!>\!\! 2\mathbf{P}^{\frac{1}{2}}\mathbf{Q}^{\frac{1}{2}}\!\!
 - \!\!2(1-2\epsilon)\mathbf{P}_{\mathbf{i}_1}^{\frac{1}{2}}\mathbf{Q}_{\mathbf{i}_1}^{\frac{1}{2}}-\delta\right)
\to 1,
\end{align*}
as $n\to \infty$. Where $\mathbf{P}_{\mathbf{i}}\triangleq Var(\tilde{E}_{\mathbf{i}})$, $\mathbf{Q}_{\mathbf{i}}\triangleq Var(\tilde{F}_{\mathbf{i}})$, $\mathbf{P}\triangleq Var(\tilde{E})$, and $ \mathbf{Q}\triangleq Var(\tilde{F})$.
    \label{th:main}
    \end{Theorem}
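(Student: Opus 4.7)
The plan is to combine the lower bound from Theorem~\ref{th:sec3} with the variance concentration from Theorem~\ref{th:sec4}. Applying Theorem~\ref{th:sec3} to $E := E_1$ and $F := F_1$, and using $\sum_{\mathbf{i}} \mathbf{P}_{\mathbf{i}} = \mathbf{P}$, $\sum_{\mathbf{i}} \mathbf{Q}_{\mathbf{i}} = \mathbf{Q}$ (orthogonality from Proposition~\ref{prop:belong2}, part~2), one obtains
\begin{align*}
P_{X^n,Y^n}(E(X^n) \neq F(Y^n)) \geq 2\sqrt{\mathbf{P}\mathbf{Q}} - 2\sum_{\mathbf{i}}(1-2\epsilon)^{N_{\mathbf{i}}}\mathbf{P}_{\mathbf{i}}^{\frac{1}{2}}\mathbf{Q}_{\mathbf{i}}^{\frac{1}{2}}.
\end{align*}
The $\mathbf{i}=\mathbf{i}_1$ summand contributes exactly $2(1-2\epsilon)\mathbf{P}_{\mathbf{i}_1}^{1/2}\mathbf{Q}_{\mathbf{i}_1}^{1/2}$, and the $\mathbf{i}=\underline{0}$ summand vanishes since $\mathbf{P}_{\underline{0}}=0$, so the statement reduces to showing that the residual
\begin{align*}
R := \sum_{\mathbf{i}\neq \mathbf{i}_1,\, \mathbf{i}\neq \underline{0}}(1-2\epsilon)^{N_{\mathbf{i}}}\mathbf{P}_{\mathbf{i}}^{\frac{1}{2}}\mathbf{Q}_{\mathbf{i}}^{\frac{1}{2}}
\end{align*}
satisfies $|R|<\delta/2$ with $P_{\mathscr{S}}$-probability tending to one.

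To bound $R$, I would split its index set by Hamming weight at a threshold $m$, writing $R = R_{\leq m}+R_{>m}$. For the tail, pulling out $|1-2\epsilon|^{m+1}$ and applying Cauchy--Schwarz together with the elementary bound $\mathbf{P},\mathbf{Q}\leq 1/4$ (the variance of a $\{0,1\}$-valued function) gives the deterministic estimate $|R_{>m}|\leq |1-2\epsilon|^{m+1}/4$; since $|1-2\epsilon|<1$ outside the degenerate cases where $\epsilon$ equals $0$ or $1$, a choice of $m$ depending only on $\delta$ and $\epsilon$ makes $|R_{>m}|<\delta/4$. For the middle range, a second Cauchy--Schwarz gives
\begin{align*}
|R_{\leq m}| \leq \sqrt{\sum_{\substack{0<N_{\mathbf{i}}\leq m \\ \mathbf{i}\neq \mathbf{i}_1}}\mathbf{P}_{\mathbf{i}}}\;\sqrt{\sum_{\substack{0<N_{\mathbf{i}}\leq m \\ \mathbf{i}\neq \mathbf{i}_1}}\mathbf{Q}_{\mathbf{i}}},
\end{align*}
and Theorem~\ref{th:sec4}, applied once with $k=1$ to $\underline{E}$ and once to $\underline{F}$ with $\gamma=(\delta/4)^2$, forces each factor to be smaller than $\delta/4$ with $P_{\mathscr{S}}$-probability approaching one. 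A union bound then yields $|R|<\delta/2$ with high probability, which substituted back into the Theorem~\ref{th:sec3} inequality proves the claim.

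The main obstacle, in my view, is Theorem~\ref{th:sec4} itself (relegated to the appendix); once that concentration is granted, what remains is the split-and-Cauchy--Schwarz recipe above, whose only subtlety is the three-way partition of indices into $\{\mathbf{i}_1\}$, small-Hamming-weight-but-not-$\mathbf{i}_1$, and large-Hamming-weight, where the geometric decay $|1-2\epsilon|^{N_{\mathbf{i}}}$ handles the tail and Theorem~\ref{th:sec4} handles the middle range.
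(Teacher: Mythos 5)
Your proposal is correct and follows essentially the same route as the paper's proof: Theorem~\ref{th:sec3} supplies the lower bound, and the cross-term $\sum_{\mathbf{i}}(1-2\epsilon)^{N_{\mathbf{i}}}\mathbf{P}_{\mathbf{i}}^{1/2}\mathbf{Q}_{\mathbf{i}}^{1/2}$ is split at a Hamming-weight threshold $m$, with the tail killed by the geometric factor plus Cauchy--Schwarz and the low-weight remainder killed by Theorem~\ref{th:sec4} (applied to both $\underline{E}$ and $\underline{F}$) plus Cauchy--Schwarz. The only difference is bookkeeping: you isolate the $\mathbf{i}_1$ term exactly and bound the residual by $\delta/2$, while the paper absorbs the low-weight terms into $(1-2\epsilon)(\mathbf{P}_{\mathbf{i}_1}+\gamma)^{1/2}(\mathbf{Q}_{\mathbf{i}_1}+\gamma)^{1/2}$ before letting $\gamma\to 0$ and $m\to\infty$.
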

\begin{IEEEproof}
 Please refer to Appendix.
\end{IEEEproof}
\begin{Remark}
 The result in the theorem is also valid in the case when the SLCS sets $E=F$, and $E$ is produced based on $P_{\mathscr{S}}$ (i.e. when both encoders use the same encoding function.).
\end{Remark}
In order to increase the correlation between the outputs of encoding functions produced using SLCSs, the variance of $\tilde{E}_{k,\mathbf{i}_k}, k\in [1,n]$ needs to increase. This in turn increases the single-letter mutual information between the input and output of the encoder, which would require higher rates. For example consider the extreme case where $Var(\tilde{E}_{k,\mathbf{i}_k})$ is set to maximum (i.e. $Var(\tilde{E}_{k,\mathbf{i}_k})= Var(\tilde{E}_k)$). This requires ${E}_k(X^n)=X_k$. So, in order to achieve maximum correlation, the encoder must use uncoded transmission. 
 
 Theorem \ref{th:main} shows that there is a discontinuity in the correlation preserving ability of SLCSs as a function of the joint distribution $P(X,Y)$. 
Particularly, when $X=Y$, there is common-information \cite{ComInf1} available at the encoders. If the encoders use the same encoding function $\underline{E}(X^n)=\underline{E}(Y^n)$, their outputs would be equal with probability one. Whereas, if $P(X=Y)=1-\epsilon$, for any non-zero $\epsilon$, the output correlation is bounded away from 0 by the bound provided in Theorem \ref{th:main}, and the probability of equal outputs does not approach $1$ as $n\to \infty$. So, the correlation between the outputs of these encoding functions is discontinuous as a function of $\epsilon$. A special case of this phenomenon was observed in \cite{wagner}, in the distributed source coding problem, and was used to prove the sub-optimality of the Berger-Tung strategy.

\section{Multi-terminal Communication Examples}\label{sec:Ex}
In this section, we provide an example of a multi-terminal communication setup where SLCSs have suboptimal performance. We use the discontinuity mentioned in the previous section to show the sub-optimality of SLCSs.

Consider the problem of transmission of correlated sources over the interference channel (ICCS) described in \cite{ICcorr}. We examine the specific ICCS setup shown in Figure \ref{fig:ICcorr}. Here, the sources $X$ and $Y$ are Bernoulli random variables with parameters $\alpha_X$ and $\alpha_Y$, and $Z$ is a $q$-ary random variable with distribution $P_Z$. $X$ and $Z$ are independent. $Y$ and $Z$ are also independent. Finally, $X$ and $Y$ are correlated, and $P(X\neq Y)=\epsilon$. The random variable $N_{\delta}$ is Bernoulli with parameter $\delta$. The first transmitter transmits the binary input $X_1$, and the second transmitter transmits the pair of inputs $(X_{21}, X_{22})$, where $X_{21}$ is $q$-ary and $X_{22}$ is binary. Receiver 1 receives $Y_1=X_1\oplus_2N_{\delta}$, and receiver 2 receives $Y_2$ which is given below:
\begin{align}\label{eq:y2}
 Y_2= 
 \begin{cases}
 X_{21}, \qquad \qquad &\text{if }X_{22}=X_1,\\
 e,     & otherwise.
\end{cases}
\end{align}
\begin{figure}[!t]
\centering
\includegraphics[height=1.2in]{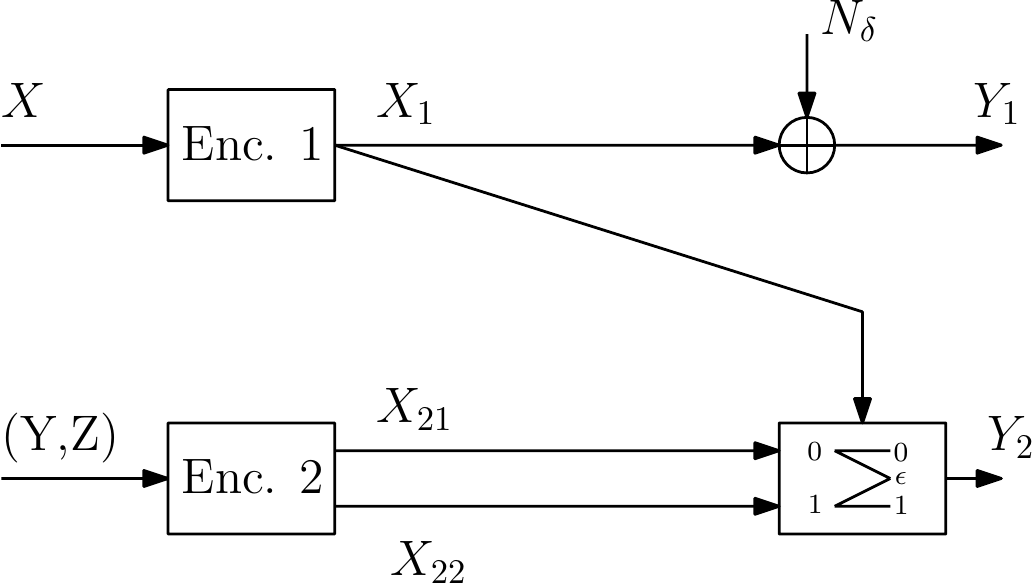}
 \caption{An ICCS example where SLCSs are suboptiomal.}
\label{fig:ICcorr}
\end{figure}
So, the second channel outputs $X_{21}$ noiselessly if the second encoder `guesses' the first encoder's output correctly (i.e. $X_{22}=X_1$), otherwise an erasure is produced. The following proposition gives a set of sufficient conditions for the transmission of correlated sources over this interference channel:
\begin{Proposition}\label{prop:ICcorrach}
 The sources $X$ and $Z$ are transmissible if there exist $\epsilon, \gamma,d >0$, and $n\in \mathbb{N}$ such that:
 \begin{align*}
 & H(X)\leq (1-h_b(\delta))\left(1- \frac{h_b(\gamma+d)}{1-h_b(\delta)}\right)+O\left(\frac{1+ \sqrt{nV+k\mathcal{V}(d)}Q^{-1}(\gamma)}{\sqrt{n}}\right),\\
 &  H(Z)\leq \left((1-\epsilon)^k\log{q}\right)\left(1- \frac{h_b(\gamma+d)}{1-h_b(\delta)}\right),
\end{align*}
 where $h_b(\cdot)$ is the binary entropy function, $V=\delta(1-\delta)log_2{(\frac{1-\delta}{\delta})}$ is the channel dispersion, and $\mathcal{V}(d)$ is the rate-dispersion function as in \cite{KostinaSC}, and $Q(.)$ is the Gaussian complementary cumulative distribution function.
  
 For a fixed $n$, $\epsilon$ and $\gamma$, we denote the set of pairs $(H(X), H(Z))$ which satisfy the bounds by $S(n,\epsilon,\gamma)$.
 \end{Proposition}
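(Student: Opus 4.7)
My plan is to prove the proposition via an explicit coding scheme with a short inner block length $k$ that preserves per-symbol agreement between the two encoders' channel outputs, combined with finite-blocklength outer source and BSC channel codes in the spirit of \cite{KostinaSC}.

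First I would group the $n$ channel uses into $n/k$ sub-blocks of length $k$. Within each sub-block, both encoders apply the \emph{same} deterministic inner map, encoder 1 to $X^{(j)}$ and encoder 2 to $Y^{(j)}$. The inner map has effective length at most $k$, so whenever $X^{(j)}=Y^{(j)}$ (an event of probability exactly $(1-\epsilon)^k$) the channel-input sub-blocks satisfy $X_1^{(j)}=X_{22}^{(j)}$ coordinate-wise, yielding a clean $q$-ary channel for $X_{21}$ in that sub-block and an erasure otherwise. Over the $n/k$ sub-blocks the fraction of non-erased channel-$2$ uses concentrates at $(1-\epsilon)^k$, so an erasure-channel code on $X_{21}$ can deliver $Z^n$ at rate approaching $(1-\epsilon)^k\log q$.

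Next I would set up the channel-$1$ transmission. A fraction $1-h_b(\gamma+d)/(1-h_b(\delta))$ of the channel-$1$ uses carries a source-compressed description of $X^n$ through BSC$(\delta)$ using the Kostina--Verd\'u channel code of rate $1-h_b(\delta)-\sqrt{V/n}\,Q^{-1}(\gamma)+O(\log n/n)$ with block error at most $\gamma$; the inner source-coding layer contributes the $\sqrt{k\,\mathcal V(d)}\,Q^{-1}(\gamma)$ dispersion term. The remaining fraction $h_b(\gamma+d)/(1-h_b(\delta))$ is reserved as overhead, both for signaling sub-block alignment to receiver $2$ and for absorbing the source-coding redundancy $h_b(\gamma+d)$ needed so that the overall source description decodes with error at most $\gamma+d$. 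A union bound over (i) channel-$1$ decoding failure, (ii) erasure-channel decoding failure on $X_{21}$, and (iii) the empirical fraction of matched sub-blocks deviating from $(1-\epsilon)^k$, each controlled by the appropriate finite-blocklength or Chernoff-type estimate, yields overall error probability $O(\gamma)$.

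The main obstacle I anticipate is the joint bookkeeping of the two blocklength scales: the inner length $k$ governs the correlation-preservation factor $(1-\epsilon)^k$ and thus the usable rate on channel $2$, while the outer length $n$ governs the Kostina--Verd\'u dispersion corrections $\sqrt{V/n}\,Q^{-1}(\gamma)$ and $\sqrt{k\,\mathcal V(d)}\,Q^{-1}(\gamma)$. The coupled choice of $k$, $\gamma$, and $d$ so that the two resulting inequalities align with the proposition's exact form is the main calculation; once the scheme is fixed, achievability is a standard union-bound argument.
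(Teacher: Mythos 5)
Your overall architecture is the paper's: a short inner code applied identically by both encoders so that their channel inputs agree with probability $(1-\epsilon)^k$ per sub-block, an erasure-channel code on $X_{21}$ for $Z$, Kostina--Verd\'u finite-blocklength coding for $X$, and an overhead fraction $\frac{h_b(\gamma+d)}{1-h_b(\delta)}$. But as written there are genuine gaps. First, you are inconsistent about which channel-$1$ uses are per-sub-block functions of $X$: in your first paragraph \emph{all} channel uses are generated by the inner maps (so every channel-$2$ use is guessable and you claim rate $(1-\epsilon)^k\log q$ for $Z$), while in your second paragraph a long Kostina--Verd\'u channel code, whose inputs necessarily depend on the entire source block, occupies a fraction of those same uses. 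Encoder $2$ cannot match encoder $1$ on any use carried by a long code, so channel $2$ is usable only on the short-block fraction; that is precisely where the factor $\left(1-\frac{h_b(\gamma+d)}{1-h_b(\delta)}\right)$ multiplying $(1-\epsilon)^k\log q$ in the second bound comes from, and your accounting omits it. The paper resolves this by making phase $1$ a fixed-blocklength \emph{joint} source-channel code with source/channel lengths $(k,n)$, $k=n\left(1-\frac{h_b(d+\gamma)}{1-h_b(\delta)}\right)^{-1}$, repeated over $m$ blocks with $m\to\infty$, and by letting encoder $2$ transmit $Z$ only during phase $1$; the term $\sqrt{nV+k\mathcal{V}(d)}\,Q^{-1}(\gamma)$ is the dispersion of that single inner JSCC at the \emph{fixed} lengths $(k,n)$, not a vanishing $\sqrt{V/n}$ correction of a long outer channel code plus a separate $\sqrt{k\mathcal{V}(d)}$ source term as in your bookkeeping.

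Second, the problem requires \emph{lossless} reconstruction of $X$ at decoder $1$ (the later sub-optimality argument explicitly relies on this), but your scheme only delivers a description that ``decodes with error at most $\gamma+d$,'' i.e.\ a lossy one. The overhead fraction is not for ``sub-block alignment'': in the paper it carries a refinement, namely a random bin index of $X(1:k,1:m)$ at rate $h_b(d+\gamma)$ per source symbol transmitted over a capacity-achieving long BSC code; decoder $1$ combines this bin index with its phase-$1$ reconstruction (average distortion at most $\gamma+d$) to recover $X$ exactly. Without that refinement step the first inequality of the proposition is not established, and once you add it the $Z$-rate accounting must be corrected as above, since encoder $2$ cannot track encoder $1$ during the refinement phase.
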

 \begin{IEEEproof}
First we provide an outline of the coding strategy. Fix $n,m\in \mathbb{N}, d, \gamma\in \mathbb{R}$, where $n\ll m$. Let $k=n\left(1-\frac{h_b(d+\gamma)}{1-h_b(\delta)}\right)^{-1}$. The encoders send $km$ bits of the compressed input at each block of transmission.
The first encoder transmits its source in two steps. First, it uses a fixed blocklength source-channel code \cite{KostinaSC} with parameters $(k,n,d,\gamma)$ . The code maps $k$-length blocks of the source to $n$-length blocks of the channel input, and the average distortion resulting from the code is less than $d+\gamma$. In this step, the encoder transmits the source in $m$ blocks of length $k$. A total of $nm$ channel uses are needed (note that n<k). In the second step, the encoder uses a large blocklength code to correct the errors in the previous step. 
The code has rate close to $\frac{h_b(\gamma+d)}{1-h_b(\delta)}$, and its input length is equal to $km$. 

The second encoder only transmits messages in the first step of transmission. It uses the same fixed blocklength code as the first encoder and the source sequence $Y^k$ to estimate the outcome of the first encoder. It sends this estimate of the first encoder's output on $X^n_{22}$. Since $P(X^k=Y^k)=(1-\epsilon)^k$, we conclude that  $X^k_1$ and $X^k_{22}$ are equal at least with probability $(1-\epsilon)^k$. The encoder sends the source $Z$ using $X_{21}$ over the resulting $q$-ary erasure channel which has probability of erasure at most $(1-\epsilon)^k$. The following provides a detailed descriptions of the coding strategy:
\\\textbf{Codebook Generation:} Fix $n, \epsilon, d$. Let $k=n\left(1-\frac{h_b(d+\gamma)}{1-h_b(\delta)}\right)^{-1}$. 
Let $C_k$ be the optimal source-channel code with parameters $(k,n,d,\gamma)$ for the point-to-point transmission of a binary source over the binary symmetric channel, as described in \cite{KostinaSC}. The code transmits $k$-length blocks of the source using $n$-length blocks of the channel input; and guarantees that the resulting distortion at each block is less than $d$ with probability $(1-\epsilon)$ (i.e $P(d_H(X^n,\hat{X}^n)>d)\leq \gamma$, where $\hat{X}$ is the reconstruction of the binary source $X$ at the decoder). In \cite{KostinaSC}, it is shown that the parameters of the code satisfy:
\begin{align*}
 n(1-h_b(\delta))-&k\left(H(X)-h_b(\alpha_X\ast d)\right)
 \\&= \sqrt{nV+k\mathcal{V}(d)}Q^{-1}(\gamma)+O({log(n)}).
\end{align*}
Since $P(d_H(X^n,\hat{X}^n)>d)\leq \gamma$, it is straightforward to show that the average distortion is less than or equal to $\gamma+d$. Also, construct a family of good channel codes $C'_m, m\in \mathbb{N}$ for the binary symmetric channel with rate $R_m=1-h_b(\delta)-\lambda_m$, where $\lambda_m\to 0$ as $m\to \infty$. Next, construct a family of good channel codes $C''_m, m\in \mathbb{N}$ for the $q$-ary erasure channel with rate $R_m=(1-\epsilon)^klog(q)-\lambda_m$. Finally, randomly and uniformly bin the space of binary vectors of length $kn$ with rate $R'=h_n(d+\gamma)$. More precisely, generate a binning function $B:\{0,1\}^{km}\to \{0,1\}^{kmR'}$, by mapping any vector $\mathbf{i}$ to a value chosen uniformly from $\{0,1\}^{kmR'}$.
\\\textbf{Encoding:} Fix $m$. At each block the encoders transmit $km$ symbols of the source input. Let the source sequences be denoted by $X(1:k,1:m), Y(1:k,1:m), Z(1:k,1:m)$, where we have broken the source vectors into $m$ blocks of length $k$. In this notation $X(i,j)$ is the $i$th element of the $j$th block, and $X(1:k,j), j\in [1,m]$ is the $j$th block.
\\\textbf{Step 1:} Encoder 1 uses the code $C_k$ to transmit each of the blocks $X(1:k,i), i\in [1,m]$ to the decoder. The second encoder finds the output of the code $C_k$ when $Y(1:k,i)$ is fed to the code, and transmits the output vector on $X_{22}(1:n,i)$. The encoder uses an interleaving method similar to the one in \cite{FinLen} to transmit $Z$. For the sequence $Z(1:k,1:m)$, it finds the output of $C''_{km}$ for this input and transmits it on $X_{21}(1:n,1:m)$.
\\\textbf{Step 2:} The first encoder transmits $B(X(1:k,1:m))$ to the decoder losslessly using $C'_{kmR'}$.
\\\textbf{Decoding:} In the first step, the first decoder reconstructs $X(1:k,1:m)$ with average distortion at most $\gamma+d$. In the second step, using the bin number $B(X(1:k,1:m))$ it can losslessly reconstruct the source, since $C'_{kmR'}$ is a good channel code. Decoder 2 also recovers $Z(1:k,1:m)$ losslessly using $Y_2(1:k,1:m)$ since $C''_{km}$ is a good channel code. 

The conditions for successful transmission is given as follows:
\begin{align*}
 n(1-h_b(\delta))-&k\left(H(X)-h_b(\alpha_X\ast d)\right)
 \\&\geq \sqrt{nV+k\mathcal{V}(d)}Q^{-1}(\gamma)+O({log(n)}),\\
 &n(1-\epsilon)^klog(q)\geq kH(Z).
\end{align*}
Simplifying these conditions by replacing $k=n\left(1-\frac{h_b(d+\gamma)}{1-h_b(\delta)}\right)^{-1}$ proves the proposition.
\end{IEEEproof}
 
 The bound provided in Proposition \ref{prop:ICcorrach} is not calculable without the exact characterization of the $O(\frac{log(n)}{n})$ term. However, we use this bound to prove the sub-optimality of SLCSs. First, we argue that the transmissible region is `continuous' as a function of $\epsilon$. Note that for $\epsilon=0$, sources with parameters $(H(X),H(Z))=(1-h_b(\delta), \log{q})$ are transmissible. The region in Proposition \ref{prop:ICcorrach} is continuous in the sense that as $\epsilon$ approaches 0, the pairs $(H(X),H(Z))$ in the neighborhood of $(1-h_b(\delta), \log{q})$ satisfy the bounds given in the proposition (i.e. the corresponding sources are transmissible).
\begin{Proposition}
For all $\lambda>0$, there exist $\epsilon_0, \gamma_0>0$, and $n_0\in \mathbb{N}$ such that: 
 \begin{align*}
 \forall \epsilon<\epsilon_0: (1-h_b(\delta)-\lambda, \log{q}-\lambda)\in S(n_0,\epsilon, \gamma_0).
\end{align*}
\end{Proposition}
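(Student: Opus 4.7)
The plan is to verify the two inequalities defining $S(n_0,\epsilon,\gamma_0)$ by making the parameter choices in a specific order: first $d$ and $\gamma_0$, then the blocklength $n_0$, and finally the correlation parameter $\epsilon_0$. Write $\beta \triangleq 1 - h_b(d+\gamma_0)/(1-h_b(\delta))$, so that $k_0 = n_0/\beta$ and the two bounds of Proposition~\ref{prop:ICcorrach} read $H(X) \leq (1-h_b(\delta))\beta + O(\cdot)$ and $H(Z) \leq (1-\epsilon)^{k_0}\beta\log q$.

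First, I would pick $d$ and $\gamma_0$ strictly positive but so small that $h_b(d+\gamma_0) < \lambda\cdot\min\!\bigl(1/3,\,(1-h_b(\delta))/(3\log q)\bigr)$. Under this choice, $(1-h_b(\delta))\beta > 1-h_b(\delta)-\lambda/3$ and $\beta\log q > \log q-\lambda/3$, so the leading terms of both bounds already comfortably exceed the target values $1-h_b(\delta)-\lambda$ and $\log q-\lambda$.

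Second, with $d,\gamma_0$ fixed, I would examine the correction term in the $H(X)$ bound. Tracing through the proof of Proposition~\ref{prop:ICcorrach}, this term arises from dividing $\sqrt{nV+k\mathcal{V}(d)}\,Q^{-1}(\gamma_0)+O(\log n)$ by $k$, and since $k = \Theta(n)$ it is $O\!\bigl(Q^{-1}(\gamma_0)/\sqrt{n}\bigr) + O(\log n/n)$, which vanishes as $n\to\infty$. I would then pick $n_0$ large enough that the magnitude of this correction is at most $\lambda/3$. Combined with the first step, the $H(X)$-inequality evaluated at $1-h_b(\delta)-\lambda$ holds irrespective of $\epsilon$, since the $H(X)$ bound does not involve $\epsilon$.

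Third, with $k_0 = n_0/\beta$ now fixed, the map $\epsilon\mapsto(1-\epsilon)^{k_0}$ is continuous at $\epsilon=0$ with value $1$, so I can select $\epsilon_0>0$ such that $(1-\epsilon)^{k_0}\ge 1-\lambda/(3\log q)$ whenever $\epsilon<\epsilon_0$. Then
\[(1-\epsilon)^{k_0}\beta\log q\;\ge\;\bigl(1-\tfrac{\lambda}{3\log q}\bigr)\bigl(\log q-\tfrac{\lambda}{3}\bigr)\;>\;\log q-\lambda,\]
verifying the $H(Z)$-inequality. The main obstacle is quantifying the rate at which the $O(\cdot)$ correction in Proposition~\ref{prop:ICcorrach} vanishes with $n$ for fixed $d,\gamma_0$; this requires carefully revisiting the finite-blocklength source-channel bound of \cite{KostinaSC} to confirm that it behaves as a genuinely decaying term. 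Once that is secured, the three-stage ordering $(d,\gamma_0)\to n_0\to\epsilon_0$ produces independent, compatible choices and both inequalities of $S(n_0,\epsilon,\gamma_0)$ follow for every $\epsilon<\epsilon_0$.
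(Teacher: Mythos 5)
Your argument is correct and is essentially the formalization the paper leaves implicit: the paper states this proposition without a written proof, relying on the preceding remark that for fixed blocklength the bounds of Proposition~2 vary continuously in $\epsilon$ and approach $(1-h_b(\delta),\log q)$ as $d,\gamma\to 0$ and $n\to\infty$. Your three-stage ordering of choices --- $(d,\gamma_0)$ first, then $n_0$ large enough that the finite-blocklength correction (which, read through the proof of Proposition~2 by dividing the dispersion and $O(\log n)$ terms by $k=\Theta(n)$, indeed decays like $O(Q^{-1}(\gamma_0)/\sqrt{n})+O(\log n/n)$) is below $\lambda/3$, and only then $\epsilon_0$ so that $(1-\epsilon)^{k_0}$ stays near $1$ --- captures exactly the quantifier ordering that makes the statement true, and matches the paper's intent.
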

For an arbitrary encoding scheme operating on blocks of length $n$, let the encoding functions be as follows: $X_1^n=\underline{e}_1(X^n)$, and $(X_{21}^n, X_{22}^n)=(\underline{e}_{21}(Y^n,Z^n), \underline{e}_{22}(Y^n,Z^n))$. The following lemma gives an outer bound on $H(Z)$ as a function of the correlation between the outputs of $\underline{e}_1$ and $\underline{e}_{21}$.
\begin{Lemma}
 For a coding scheme with encoding functions $\underline{e}_1(X^n), \underline{e}_{21}(Y^n,Z^n), \underline{e}_{22}(Y^n,Z^n)$, the following holds:
 \begin{align}
 \label{eq:Hz}
 H(Z)\leq \frac{1}{n}\sum_{i=1}^n P(e_{1,i}(X^n)=e_{22,i}(Y^n,Z^n))+1.
\end{align}
\end{Lemma}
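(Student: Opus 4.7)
The plan is to treat this as a converse-type statement: if the scheme is to transmit $Z$ reliably through decoder~2, then Fano's inequality forces $H(Z)$ to be at most the per-symbol mutual information between $Z^n$ and decoder~2's observation vector $Y_2^n$, and this mutual information in turn is controlled by how often the channel from encoder~2 is \emph{not} erased, which is precisely $P(e_{1,i}(X^n)=e_{22,i}(Y^n,Z^n))$.

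First I would invoke Fano's inequality at decoder~2: under the hypothesis that the scheme transmits $(X,Z)$ with vanishing error, $H(Z^n\mid Y_2^n)\le n\epsilon_n$ with $\epsilon_n\to 0$. Using $nH(Z)=H(Z^n)$ (memorylessness of the source) this gives $nH(Z)\le I(Z^n;Y_2^n)+n\epsilon_n$. Next I would apply the standard single-letterization $I(Z^n;Y_2^n)\le H(Y_2^n)\le\sum_{i=1}^{n}H(Y_{2,i})$.

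The key step is to bound each $H(Y_{2,i})$ using the structure of the channel law \eqref{eq:y2}. Let $S_i=\mathbbm{1}\{Y_{2,i}\neq e\}$ be the erasure indicator; then $S_i=\mathbbm{1}\{X_{22,i}=X_{1,i}\}=\mathbbm{1}\{e_{22,i}(Y^n,Z^n)=e_{1,i}(X^n)\}$ by the definitions of the encoders and of $Y_2$. The chain rule gives
\begin{align*}
H(Y_{2,i}) &= H(S_i)+H(Y_{2,i}\mid S_i)\\
&\le 1 + P(S_i=1)\,H(Y_{2,i}\mid S_i=1)+P(S_i=0)\cdot 0\\
&\le 1 + P\!\left(e_{1,i}(X^n)=e_{22,i}(Y^n,Z^n)\right),
\end{align*}
where the last inequality bounds the conditional entropy of $Y_{2,i}$ given non-erasure by the logarithm of the output alphabet size (absorbed as in the statement of the lemma).

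Summing over $i$, dividing by $n$, and letting $\epsilon_n\to 0$ yields exactly the inequality in \eqref{eq:Hz}. The only real subtlety is the middle step: one must verify that, conditional on $S_i$, the remaining entropy of $Y_{2,i}$ is zero on the erasure event and bounded by a constant on the non-erasure event, which relies on reading off the channel transition (\ref{eq:y2}) correctly and on recognizing that the non-erasure probability is exactly the output-agreement probability between $\underline{e}_1$ and $\underline{e}_{22}$ --- the quantity that Theorem~\ref{th:main} will later upper-bound away from one whenever $\epsilon>0$.
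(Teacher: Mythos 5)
Your proof is correct and takes essentially the same route as the paper's: both apply Fano's inequality at decoder 2 and then isolate the erasure indicator, with the $+1$ coming from its binary entropy and the agreement probability $P(e_{1,i}(X^n)=e_{22,i}(Y^n,Z^n))$ entering through the non-erasure event (the paper organizes this as $I(E^n,Y_2^n;Z^n)=I(E^n;Z^n)+I(Y_2^n;Z^n\mid E^n)$ instead of your $\sum_i H(Y_{2,i})$ decomposition, but the bookkeeping is equivalent). Like the paper's own proof, your derivation naturally produces a $\log q$ factor on the agreement term that the lemma's statement drops, so even that normalization quirk is shared.
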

\begin{IEEEproof}
 Since $Z^n$ is reconstructed losslessly at the decoder, by Fano's inequality the following holds:
 \begin{align*}
 &H(Z^n)\approx I(Y_2^n;Z^n)\stackrel{(a)}{=} I(E^n,Y_2^n;Z^n)= I(E^n;Z^n)+I(Y_2^n;Z^n|E^n)\\
 &\stackrel{(b)}\leq H(E^n)+\sum_{i=1}^n P(e_{1,i}(X^n)=e_{22,i}(Y^n,Z^n))\log{q}
 \\&\stackrel{(c)}{\leq} n+\sum_{i=1}^n P(e_{1,i}(X^n)=e_{22,i}(Y^n,Z^n))\log{q},
\end{align*}
where in (a) we have defined $E^n$ as the indicator function of the event that $Y_2=e$, in (b) we have used Equation \ref{eq:y2}
and in (c) we have used the fact that $E^n$ is binary.
 \end{IEEEproof}
Using Theorem \ref{th:main}, we show that if the encoding functions are generated using SLCSs, $P(e_{1,i}(X^n)=e_{22,i}(Y^n,Z^n))$ is discontinuous in $\epsilon$. The next proposition shows that SLCSs are sub-optimal:
\begin{Proposition}
 There exists $\lambda>0$, and $q\in \mathbb{N}$, such that sources with $(H(X),H(Z))=(1-h_b(\delta)-\lambda, \log{q}-\lambda)$ are not transmissible using SLCSs. 
\end{Proposition}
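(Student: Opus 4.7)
The plan is to argue by contradiction: assume that for every $\lambda>0$ and every $q\in\mathbb{N}$ some SLCS reliably transmits the sources with $(H(X),H(Z))=(1-h_b(\delta)-\lambda,\log q-\lambda)$, and derive a contradiction for suitable $\lambda,q$. The first step I would take is to invoke the Lemma preceding this proposition; its proof actually shows $H(Z)\leq 1+\log q\cdot\bar p$ with $\bar p\triangleq \frac{1}{n}\sum_{i=1}^n P(e_{1,i}(X^n)=e_{22,i}(Y^n,Z^n))$, so the hypothesis forces
\[
1-\bar p\;\leq\;\frac{1+\lambda}{\log q},
\]
which is arbitrarily small once $q$ is large. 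In other words, the ``guess'' $e_{22}$ must match $e_1$ coordinate by coordinate on almost all $n$ positions.

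The second step would be to bound $1-\bar p$ from below using Theorem \ref{th:main}. Since $Z$ is independent of $(X,Y)$, conditioning on $Z^n=z^n$ turns $e_{22}(\cdot,z^n)$ into an encoder on $Y^n$ alone; the permutation invariance property of Definition \ref{def:SLCS}(3) gives identical marginals across coordinates, so applying Theorem \ref{th:main} coordinate-wise and averaging over $i$ and over $z^n$ yields, with $P_{\mathscr{S}}$-probability tending to one,
\[
1-\bar p\;\geq\;2\sqrt{\mathbf{P}\mathbf{Q}}-2(1-2\epsilon)\sqrt{\mathbf{P}_{\mathbf{i}_1}\mathbf{Q}_{\mathbf{i}_1}}-\delta\;\geq\;4\epsilon\sqrt{\mathbf{P}\mathbf{Q}}-\delta,
\]
where the second inequality uses $\mathbf{P}_{\mathbf{i}_1}\leq\mathbf{P}$ and $\mathbf{Q}_{\mathbf{i}_1}\leq\mathbf{Q}$, immediate from $\mathbf{P}=\sum_{\mathbf{i}}\mathbf{P}_{\mathbf{i}}$ with nonnegative summands. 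To close the argument I need $\sqrt{\mathbf{P}\mathbf{Q}}$ uniformly bounded below: receiver $1$ decodes $X$ from $Y_1=X_1\oplus_2 N_\delta$, so the standard point-to-point converse forces $h_b(q_E)\geq H(X)=1-h_b(\delta)-\lambda$, pinning $q_E=P(e_{1,1}(X^n)=1)$ inside a fixed closed sub-interval of $(0,1)$, and $|q_F-q_E|\leq 1-\bar p\to 0$ from the first step does the same for $q_F=P(e_{22,1}(Y^n,Z^n)=1)$. Hence $\sqrt{\mathbf{P}\mathbf{Q}}\geq c(\delta,\lambda)>0$ uniformly in $n$ and $q$.

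Finally, combining the two bounds gives $4\epsilon c(\delta,\lambda)-\delta\leq 1-\bar p\leq (1+\lambda)/\log q$, which is violated once $\lambda$ is fixed small enough that $c(\delta,\lambda)$ stays positive, $\delta$ is taken sufficiently small (for $n$ large), and $q$ is chosen so large that $\log q>(1+\lambda)/(4\epsilon c(\delta,\lambda)-\delta)$; this delivers the promised sub-optimality. The main obstacle, and the place I would be most careful, is the coordinate-wise application of Theorem \ref{th:main} to an encoder whose input $(Y^n,Z^n)$ contains the auxiliary source $Z^n$ rather than a single memoryless source. I would handle this via the conditioning argument above and by verifying that the three axioms of Definition \ref{def:SLCS} are inherited by $e_{22}(\cdot,z^n)$ for $P_{Z^n}$-typical $z^n$, since those axioms depend only on the code-generation randomness and on the joint-type structure of the input, both of which decouple from the independent sequence $Z^n$.
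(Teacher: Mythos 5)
Your argument is sound and rests on the same two pillars as the paper's proof---the Lemma bounding $H(Z)$ by the average agreement probability (with the $\log q$ factor restored from its proof, as you correctly note) and Theorem \ref{th:main}---but the way you close the contradiction is genuinely different. The paper reads Theorem \ref{th:main} as saying that near-perfect agreement forces the single-letter components $\mathbf{P}_{j,\mathbf{i}_j}$ to dominate, i.e.\ essentially uncoded transmission $X_1^n\approx X^n$, which is then incompatible with lossless reconstruction of $X$ over the BSC$(\delta)$. You instead exploit $\mathbf{P}_{\mathbf{i}_1}\mathbf{Q}_{\mathbf{i}_1}\leq\mathbf{P}\mathbf{Q}$ and $(1-2\epsilon)<1$ to extract the cruder but unconditional bound $P(E\neq F)\geq 4\epsilon\sqrt{\mathbf{P}\mathbf{Q}}-\delta$, so that near-perfect agreement forces nearly degenerate output bits, while the point-to-point converse at receiver 1 (Fano plus $X^n\to X_1^n\to Y_1^n$, giving $\frac{1}{n}\sum_i h_b(P(E_{1,i}(X^n)=1))\gtrsim 1-h_b(\delta)-\lambda$) keeps the output biases, hence $\mathbf{P}$ and, via $|P(E_i=1)-P(F_i=1)|\leq P(E_i\neq F_i)$, also $\mathbf{Q}$, bounded away from zero. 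This is arguably a more literal use of the stated bound---the factor $(1-2\epsilon)$ alone already precludes near-agreement with non-degenerate outputs, with no appeal to the ``uncoded'' interpretation---and it yields an explicit quantitative contradiction $4\epsilon\, c(\delta,\lambda)-\delta\leq(1+\lambda)/\log q$ rather than the paper's qualitative one. The caveats you flag are real but no worse than the paper's own level of informality: Theorem \ref{th:main} is stated for coordinate $1$ of a pair of encoders driven by an $\epsilon$-correlated DMS pair, so the coordinate-wise application (after conditioning on $z^n$) needs property 3) of Definition \ref{def:SLCS} together with a Markov/averaging step, because both the Lemma and the converse at receiver 1 only constrain averages over coordinates, not any single fixed coordinate.
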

\begin{IEEEproof}
 Let $X_1^n=\underline{E}_{1}(X^n)$, and $X_{22^n}=\underline{E}_{22,z^n}(Y^n), z^n\in \{0,1\}^n$ be the encoding functions used in the two encoders to generate $X_1$ and $X_{22}$. If $H(Z)\approx \log(q)$, from \eqref{eq:Hz}, we must have $P(E_{1,j}(X^n)=E_{22,z^n,j}(Y^n))\approx 1 $ for almost all of the indices $j\in[1,n]$. From Theorem \ref{th:main}, this requires $\mathbf{P}_{j,\mathbf{i}_{j}}\approx1$, which requires uncoded transmission (i.e. $X_1^n=\underline{E}(X^n)\approx X^n$). However, uncoded transmission contradicts the lossless reconstruction of the source at the first decoder. 
  \end{IEEEproof}
The proof is not restricted to any particular scheme, rather it shows that any SLCS would have sub-optimal performance.

\section{Conclusion}\label{sec:conc}
We characterized a set of properties which are shared between the SLCSs used in the literature. We showed that schemes which have these properties produce encoding functions which are inefficient in preserving correlation. We derived a probabilistic upper-bound on the correlation between the outputs of random encoders generated using SLCSs. We showed that the correlation between the outputs of such encoders is discontinuous with respect to the input distribution. We used this discontinuity to show that all SLCSs are sub-optimal in a specific multi-terminal communications problem involving the transmission of correlated source over the interference channel. 
\section*{Acknowledgements}
The authors are grateful to Dr. D. Neuhoff for the stimulating discussions and helpful comments. 
\appendix
\subsection{Proof of Theorem \ref{th:sec4}}
\begin{IEEEproof}
The following proposition shows that the probability $P_{\mathscr{S}}(\sum_{{\mathbf{i}}:N_{\mathbf{i}}\leq m, \mathbf{i}\neq \mathbf{i}_k} \mathbf{P}_{k,{\mathbf{i}}}\geq \gamma)$ is independent of the index $k$. This is due to property 1) in Definition \ref{def:SLCS} of SLCS's.
 \begin{Proposition}
 $P(\sum_{{\mathbf{i}}:N_{\mathbf{i}}\leq m, \mathbf{i}\neq 00\cdots01} \mathbf{P}_{k,{\mathbf{i}}}\geq \gamma)$ is constant in $k$. 
\end{Proposition}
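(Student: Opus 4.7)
The plan is to exploit property 3 of the SLCS definition (the permutation invariance $P_{\underline{E}}(\underline{E}) = P_{\underline{E}}(\underline{E}_\pi)$), which is the natural source of coordinate-index symmetry for this scheme. Fix arbitrary $k, k' \in [1,n]$ and pick any $\pi \in S_n$ with $\pi(k) = k'$, e.g.\ the transposition $(k\ k')$. By property 3, the permuted random encoder $\underline{E}_\pi(X^n) := \pi^{-1}(\underline{E}(\pi(X^n)))$ has the same law as $\underline{E}$ under $P_{\mathscr{S}}$, so every functional of its dependency spectrum has the same distribution as the corresponding functional of the spectrum of $\underline{E}$.

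I would then compute the dependency spectrum of the $k'$-th component of $\underline{E}_\pi$ in terms of that of $\underline{E}$. Unpacking the definitions, $E_{\pi,k'}(X^n) = E_{k}(\pi(X^n))$. Since $X^n$ is i.i.d., the reindexed source $\pi(X^n)$ has the same joint distribution as $X^n$, and the additive decomposition of Proposition 1 is equivariant under this coordinate relabeling: the component of $\tilde{E}_{\pi,k'}$ indexed by $\mathbf{i}$ coincides (up to stationarity of $X^n$) with the component of $\tilde{E}_{k}$ indexed by $\pi^{-1}(\mathbf{i})$, where $\pi^{-1}$ acts on an indicator vector by permuting its coordinates. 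Taking variances yields $\mathbf{P}^{\pi}_{k',\mathbf{i}} = \mathbf{P}_{k,\pi^{-1}(\mathbf{i})}$.

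Combining these two facts and reindexing via the bijection $\mathbf{j} = \pi^{-1}(\mathbf{i})$ on $\{0,1\}^n$, which preserves the Hamming weight $N_{\mathbf{i}}$ and maps $\mathbf{i}_{k'}$ to $\mathbf{i}_k$ (because $\pi(k)=k'$),
\begin{align*}
P_{\mathscr{S}}\!\left(\sum_{\substack{\mathbf{i}:\,N_{\mathbf{i}}\leq m\\ \mathbf{i}\neq \mathbf{i}_{k'}}} \mathbf{P}_{k',\mathbf{i}} \geq \gamma\right)
&= P_{\mathscr{S}}\!\left(\sum_{\substack{\mathbf{i}:\,N_{\mathbf{i}}\leq m\\ \mathbf{i}\neq \mathbf{i}_{k'}}} \mathbf{P}^{\pi}_{k',\mathbf{i}} \geq \gamma\right) \\
&= P_{\mathscr{S}}\!\left(\sum_{\substack{\mathbf{j}:\,N_{\mathbf{j}}\leq m\\ \mathbf{j}\neq \mathbf{i}_{k}}} \mathbf{P}_{k,\mathbf{j}} \geq \gamma\right),
\end{align*}
establishing the claimed $k$-independence.

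The main technical obstacle is the equivariance statement in the second step: one must verify that the recursive formula from Lemma 1 commutes with coordinate relabeling of the argument. This reduces to the observation that conditioning on $X_{\mathbf{i}}$ is the same as conditioning on $\pi(X^n)_{\pi^{-1}(\mathbf{i})}$ (since the $\sigma$-algebras coincide), combined with the uniqueness of the additive decomposition guaranteed by Proposition 1 together with the i.i.d.\ structure of the source.
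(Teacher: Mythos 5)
Your proposal is correct and follows essentially the same route as the paper's own proof: both invoke the permutation-invariance property 3 of Definition~\ref{def:SLCS} with the transposition swapping $k$ and $k'$, use the equivariance of the additive decomposition (relying on the i.i.d.\ source) to identify $\mathbf{P}^{\pi}_{k',\mathbf{i}}$ with $\mathbf{P}_{k,\pi^{-1}(\mathbf{i})}$, and reindex the weight-bounded sum, noting $\mathbf{i}_{k'}\mapsto\mathbf{i}_k$. The only cosmetic difference is that the paper carries out the argument as an explicit change of variables in the sum over encoders, whereas you phrase it as equality in distribution of $\underline{E}$ and $\underline{E}_\pi$; your write-up is, if anything, more explicit about the equivariance step the paper leaves implicit.
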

\begin{IEEEproof}

Fix $k,k'\in \mathbb{N}$. Define the permutation ${\pi_{k\to k'}}\in S_n$ as the permutation which switches the $k$th and $k'$th elements and fixes all other elements. Also, let $\mathscr{E}$ be the set of all mappings $e:\{0,1\}^n\to \{0,1\}^n$. 
\begin{align*}
  &P_{\mathscr{S}}(\sum_{{\mathbf{i}}:N_{\mathbf{i}}\leq m, \mathbf{i}\neq \mathbf{i}_k} \mathbf{P}_{k,{\mathbf{i}}}>\gamma)
  =\sum_{\underline{e}\in \mathscr{E}}
 P_{\mathscr{S}}(\underline{e})\mathbbm{1}(\sum_{{\mathbf{i}}:N_{\mathbf{i}}\leq m, \mathbf{i}\neq  \mathbf{i}_k|\underline{e}} \mathbf{P}_{k,{\mathbf{i}}}>\gamma)
\\&\stackrel{(a)}{=}\sum_{\underline{e}\in \mathscr{E}}
 P_{\mathscr{S}}(\underline{e}_{\pi_{k\to k'}})\mathbbm{1}(\sum_{{\mathbf{i}}:N_{\mathbf{i}}\leq m, \mathbf{i}\neq \mathbf{i}_k} \mathbf{P}_{k,{\mathbf{i}}}>\gamma|\underline{e})\\
  &\stackrel{(b)}{=}\sum_{\underline{g}\in \mathscr{E}}
 P_{\mathscr{S}}(\underline{g})\mathbbm{1}(\sum_{{\mathbf{i}}:N_{\mathbf{i}}\leq m, \mathbf{i}\neq \mathbf{i}_k} \mathbf{P}_{{\pi_{k\to k'}}{k},{\pi_{k\to k'}}{\mathbf{i}}}>\gamma|\underline{g})
 \\& =\sum_{\underline{g}\in \mathscr{E}}
 P_{\mathscr{S}}(\underline{g})\mathbbm{1}(\sum_{{\underline{l}}:N_{\underline{l}}\leq m, \underline{l}\neq{\pi_{k\to k'}}\mathbf{i}_k} \mathbf{P}_{k',{\underline{l}}}>\gamma|\underline{g})
  =P_{\mathscr{S}}(\sum_{{\mathbf{i}}:N_{\mathbf{i}}\leq m, \mathbf{i}\neq \mathbf{i}_{k'}} \mathbf{P}_{k',{\mathbf{i}}}>\gamma).
  \end{align*}
Where in (a) we have used property (2) 3) in Definition \ref{def:SLCS}, and in (b) we have defined $\underline{g}\triangleq \underline{e}_{{\pi_{k\to k'}}}$ and used ${\pi^2_{k\to k'}}=1$.
  
\end{IEEEproof}

Using the previous proposition, it is enough to show the theorem holds for $k=1$. For ease of notation we drop the subscript k for the rest of the proof and denote $\mathbf{P}_{1,\mathbf{i}}$ by $\mathbf{P}_{\mathbf{i}}$. 
By the Markov inequality, we have the following:
\begin{align}
 P_{\mathscr{S}}(\sum_{{\mathbf{i}}:N_{\mathbf{i}}\leq m, \mathbf{i}\neq \mathbf{i}_1} \mathbf{P}_{{\mathbf{i}}}\geq \gamma)\leq \frac{\sum_{{\mathbf{i}}:N_{\mathbf{i}}\leq m, \mathbf{i}\neq \mathbf{i}_1}\mathbb{E}_{\mathscr{S}}(\mathbf{P}_{\mathbf{i}})}{\gamma}.
 \label{eq:Markov}
\end{align}
So, we need to show that $\sum_{{\mathbf{i}}:N_{\mathbf{i}}\leq m, \mathbf{i}\neq \mathbf{i}_1}\mathbb{E}_{\mathscr{S}}(\mathbf{P}_{\mathbf{i}})$ goes to 0 for all fixed $m$. We first prove the following claim.

\begin{Claim}
Fix $\mathbf{i}$, such that $N_\mathbf{i}\leq m$, the following holds:
\begin{align*}
 \mathbb{E}_{\tilde{E}, X_{\mathbf{i}}}(\mathbb{E}^2_{X^n|X_{\mathbf{i}}}(\tilde{E}|X_{\mathbf{i}}))= \mathbb{E}_{ X_{\mathbf{i}}}(\mathbb{E}^2_{\tilde{E}, X^n|X_{\mathbf{i}}}(\tilde{E}|X_{\mathbf{i}}))+O(e^{-n\delta_X}).
 \end{align*}

%\\2)If $i_1=1$, then $\mathbb{E}_{\tilde{E}, X_{\mathbf{i}}}(\mathbb{E}^2_{X^n|X_{\mathbf{i}}}(\tilde{E}|X_{\mathbf{i}}))=O(e^{-n\delta_X})+\mathbf{P}_{\mathbf{i}_1}$.
\label{claim:expo}
\end{Claim}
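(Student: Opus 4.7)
The plan is to identify the difference of the two sides as a conditional variance of the inner expectation and then bound it by $O(e^{-n\delta_X})$ using Property~1 of the SLCS. For each realization of the random real encoder $\tilde{E}$ (the first coordinate of the vector encoder), define
\begin{equation*}
g(x_{\mathbf{i}}; \tilde{E}) \triangleq \mathbb{E}_{X^n|X_{\mathbf{i}}=x_{\mathbf{i}}}\!\big(\tilde{E}\,\big|\,X_{\mathbf{i}}=x_{\mathbf{i}}\big).
\end{equation*}
The left-hand side of the claim is $\mathbb{E}_{\tilde{E},X_{\mathbf{i}}}[g^2]$, while the right-hand side is $\mathbb{E}_{X_{\mathbf{i}}}\!\big[(\mathbb{E}_{\tilde{E}}[g])^2\big]$. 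Applying the law of total variance (equivalently, Jensen on the inner square) converts the difference into the nonnegative quantity
\begin{equation*}
\mathbb{E}_{\tilde{E},X_{\mathbf{i}}}[g^2] - \mathbb{E}_{X_{\mathbf{i}}}\!\big[(\mathbb{E}_{\tilde{E}}[g])^2\big] \;=\; \mathbb{E}_{X_{\mathbf{i}}}\!\big[\mathrm{Var}_{\tilde{E}}(g(X_{\mathbf{i}};\tilde{E}))\big],
\end{equation*}
so it remains only to show that this quantity is $O(e^{-n\delta_X})$.

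To do so, I would expand $g$ as a weighted sum over the ``free'' coordinates $X^{\sim\mathbf{i}}$ to obtain
\begin{equation*}
\mathrm{Var}_{\tilde{E}}(g) = \sum_{x^{\sim\mathbf{i}},\,y^{\sim\mathbf{i}}} P(x^{\sim\mathbf{i}}|X_{\mathbf{i}})\,P(y^{\sim\mathbf{i}}|X_{\mathbf{i}})\;\mathrm{Cov}_{\tilde{E}}\!\big(\tilde{E}(X_{\mathbf{i}},x^{\sim\mathbf{i}}),\,\tilde{E}(X_{\mathbf{i}},y^{\sim\mathbf{i}})\big).
\end{equation*}
Property~1 of Definition~\ref{def:SLCS} says that whenever $y^n \notin B_n(x^n)$, the encoder outputs $\underline{E}(x^n)$ and $\underline{E}(y^n)$ are independent, so the covariance above vanishes for any such pair. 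Since $|\tilde{E}| \le 1$, each surviving covariance is bounded by one, leaving only pairs with $(X_{\mathbf{i}},y^{\sim\mathbf{i}}) \in B_n((X_{\mathbf{i}},x^{\sim\mathbf{i}}))$ to contribute.

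Taking expectation over $X_{\mathbf{i}}$, rearranging the double sum, and applying the hypothesis $P_{X^n}(X^n \in B_n(x^n)) \le 2^{-n\delta_X}$ yields
\begin{equation*}
\mathbb{E}_{X_{\mathbf{i}}}\!\big[\mathrm{Var}_{\tilde{E}}(g)\big] \;\le\; \sum_{x^n} P(x^n)\,\frac{P_{X^n}(X^n \in B_n(x^n))}{P(X_{\mathbf{i}}=x_{\mathbf{i}})} \;\le\; |\mathcal{X}|^{N_{\mathbf{i}}}\, 2^{-n\delta_X},
\end{equation*}
which is $O(e^{-n\delta_X})$ since $N_{\mathbf{i}} \le m$ is a fixed constant independent of $n$.

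The main obstacle I anticipate is controlling the factor $1/P(X_{\mathbf{i}})$ that appears when passing between joint and conditional distributions: it is precisely the hypothesis $N_{\mathbf{i}}\le m$ that keeps this factor bounded by a constant depending only on the source and $m$, rather than growing with $n$. A secondary subtlety is that Property~1 is stated for the full vector encoder $\underline{E}$, whereas the claim concerns the scalar real function $\tilde{E}$; independence transfers to the first coordinate $E=E_1$ and hence to the affine shift $\tilde{E}=E-q$, using that $q=P(E(X^n)=1)$ is essentially deterministic under the codebook-generation law (as is the case for the typicality-based SLCSs considered in the paper).
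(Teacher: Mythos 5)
Your proposal is correct and follows essentially the same route as the paper's proof: expand the conditional mean over the free coordinates into a double sum over pairs $(x^n,y^n)$ agreeing on $X_{\mathbf{i}}$, use Property 1 of Definition \ref{def:SLCS} to make the $\tilde{E}$-covariance vanish for pairs with $y^n\notin B_n(x^n)$, and bound the surviving terms by $P_{X^n}(B_n(x^n))/P(X_{\mathbf{i}}=x_{\mathbf{i}})\le 2^{-n\delta_X}/P(x_{\mathbf{i}})$, which stays $O(e^{-n\delta_X})$ because $N_{\mathbf{i}}\le m$ is fixed (your variance/covariance framing even gives the nonnegativity of the difference, which the paper's one-sided chain leaves implicit). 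The only quibble is the constant: for a non-uniform source it should be $(\min_{x:P_X(x)>0} P_X(x))^{-N_{\mathbf{i}}}$ rather than $|\mathcal{X}|^{N_{\mathbf{i}}}$, which, as your closing remark already notes, does not affect the conclusion.
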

\begin{IEEEproof}
\begin{align*}
& \mathbb{E}_{\tilde{E}, X_{\mathbf{i}}}(\mathbb{E}^2_{X^n|X_{\mathbf{i}}}(\tilde{E}|X_{\mathbf{i}}))=\sum_{x_{\mathbf{i}}, \tilde{e}}P(x_{\mathbf{i}})P( \tilde{e})(\sum_{x_{\sim \mathbf{i}}}P(x_{\sim \mathbf{i}})\tilde{e}(x^n))^2\\
 &=\sum_{x_{\mathbf{i}}, \tilde{e}}P(x_{\mathbf{i}})P( \tilde{e})\sum_{x_{\sim \mathbf{i}}}\sum_{y^n: y_{\mathbf{i}}=x_{\mathbf{i}}}P(x_{\sim \mathbf{i}})P(y_{\sim \mathbf{i}})\tilde{e}(x^n)\tilde{e}(y^n)\\
 &=\sum_{x^n}P(x^n)\sum_{y^n:y_{\mathbf{i}}=x_{\mathbf{i}}}P(y_{\sim \mathbf{i}})\mathbb{E}_{\tilde{E}}(\tilde{E}(x^n)\tilde{E}(y^n))\\
 &=\sum_{x^n}P(x^n)\sum_{y^n:y_{\mathbf{i}}=x_{\mathbf{i}}, y^n \in B_n(x^n)}P(y_{\sim \mathbf{i}})\mathbb{E}_{\tilde{E}}(\tilde{E}(x^n)\tilde{E}(y^n))+
 \\& 
 \sum_{x^n}P(x^n)\sum_{y^n:y_{\mathbf{i}}=x_{\mathbf{i}}, y^n\notin B_n(x^n)}P(y_{\sim \mathbf{i}})\mathbb{E}_{\tilde{E}}(\tilde{E}(x^n)\tilde{E}(y^n))\\
 &\stackrel{(a)}{\leq} \sum_{x^n}P(x^n)\sum_{y^n:y_{\mathbf{i}}=x_{\mathbf{i}}, y^n \in B_n(x^n)}P(y_{\sim \mathbf{i}})+
 \\&
 \sum_{x^n}P(x^n)\sum_{y^n:y_{\mathbf{i}}=x_{\mathbf{i}}, y^n\notin B_n(x^n)}P(y_{\sim \mathbf{i}})\mathbb{E}_{\tilde{E}}(\tilde{E}(x^n)\tilde{E}(y^n))\\
 &=P(Y^n\in B_{n}(X^n)|Y_{\mathbf{i}}=X_{\mathbf{i}})+ 
 \\&
 \sum_{x^n}P(x^n)\sum_{y^n:y_{\mathbf{i}}=x_{\mathbf{i}}, y^n\notin B_n(x^n)}P(y_{\sim \mathbf{i}})\mathbb{E}_{\tilde{E}}(\tilde{E}(x^n)\tilde{E}(y^n))\\
 &\stackrel{(b)}{=}O(e^{-n\delta_X})+\sum_{x^n}P(x^n)\sum_{y^n:y_{\mathbf{i}}=x_{\mathbf{i}}, y^n\notin B_n(x^n)}P(y_{\sim \mathbf{i}})\mathbb{E}_{\tilde{E}}(\tilde{E}(x^n))\mathbb{E}_{\tilde{E}}(\tilde{E}(y^n))\\
 &\leq O(e^{-n\delta_X})+P(Y^n\in B_{n}(X^n)|Y_{\mathbf{i}}=X_{\mathbf{i}})+
 \\&
\sum_{x_{\mathbf{i}}}P(x_{\mathbf{i}})\sum_{x_{\sim \mathbf{i}}}\sum_{y^n:y_{\mathbf{i}}=x_{\mathbf{i}}}P(x_{\sim \mathbf{i}})P(y_{\sim \mathbf{i}})\mathbb{E}_{\tilde{E}}(\tilde{E}(x^n))\mathbb{E}_{\tilde{E}}(\tilde{E}(y^n))\\
&=O(e^{-n\delta_X})+\mathbb{E}_{X_{\mathbf{i}}}(\mathbb{E}^2_{\tilde{E}, {X^n|X_{\mathbf{i}}}}(\tilde{E}|X_{\mathbf{i}})).
\end{align*}
In (a) we use the fact that $\tilde{E}\leq 1$ by definition, in (b) follows from property 1) in Definition \ref{def:SLCS}. Define $\bar{E}_{\mathbf{i}}= \mathbb{E}_{\tilde{E}}(\tilde{E}_{\mathbf{i}})= \mathbb{E}_{\tilde{E}|X_{\mathbf{i}}}(\tilde{E}|X_{\mathbf{i}})-\sum_{\mathbf{j}<\mathbf{i}}\bar{E}_{\mathbf{j}}$, and also define $\bar{P}_{\mathbf{i}}\triangleq Var(\bar{E}_{\mathbf{i}}) $.
\end{IEEEproof}
 Using the above claim we have:
\begin{align}
 \nonumber&P_{\mathscr{S}}(\sum_{{\mathbf{i}}:N_{\mathbf{i}}\leq m, \mathbf{i}\neq \mathbf{i}_1} \mathbf{P}_{{\mathbf{i}}}\geq \gamma)
 \leq \frac{\sum_{{\mathbf{i}}:N_{\mathbf{i}}\leq m, \mathbf{i}\neq \mathbf{i}_1}\mathbb{E}_{\mathscr{S}}(\mathbf{P}_{\mathbf{i}})}{\gamma}
 \\&\leq \frac{2^mO(e^{-n\delta_X})+     \sum_{{\mathbf{i}}:N_{\mathbf{i}}\leq m}\mathbb{E}_{\mathscr{S}}(\bar{P}_{\mathbf{i}})-\mathbb{E}_{\mathscr{S}} (\bar{P}_{\mathbf{i}_1})}{\gamma}.
 \label{eq:follow}
\end{align}

Using the arguments from the proof of Proposition \ref{prop:belong2}, we can see that the properties stated in that Proposition hold for $\bar{E}_{\mathbf{i}}$ as well.  Using the same results as in Lemma \ref{Lem:power}, we have that $\sum_{\mathbf{i}\in\{0,1\}^n}\bar{P}_{\mathbf{i}}=\bar{P}_{\mathbf{\underline{1}}}$. Following the calculations in \eqref{eq:follow}:
\begin{align*}
 P_{\mathscr{S}}(\sum_{{\mathbf{i}}:N_{\mathbf{i}}\leq m, \mathbf{i}\neq \mathbf{i}_1} \mathbf{P}_{{\mathbf{i}}}&\geq \gamma)
 \leq \frac{2^mO(e^{-n\delta_X})+     \sum_{{\mathbf{i}}:N_{\mathbf{i}}\leq m}\mathbb{E}_{\mathscr{S}}(\bar{P}_{\mathbf{i}})-\mathbb{E}_{\mathscr{S}} (\bar{P}_{\mathbf{i}_1})}{\gamma}\\
& 
 \leq \frac{2^mO(e^{-n\delta_X})+     \sum_{{\mathbf{i}}\in \{0,1\}^n}\mathbb{E}_{\mathscr{S}}(\bar{P}_{\mathbf{i}}) -\mathbb{E}_{\mathscr{S}} (\bar{P}_{\mathbf{i}_1})}{\gamma} \\
 &=
 \frac{2^mO(e^{-n\delta_X})+    \mathbb{E}_{\mathscr{S}}( \sum_{{\mathbf{i}}\in \{0,1\}^n}\bar{P}_{\mathbf{i}}) -\mathbb{E}_{\mathscr{S}} (\bar{P}_{\mathbf{i}_1})}{\gamma} \\
 &= \frac{2^mO(e^{-n\delta_X})+    \mathbb{E}_{X^n}\left(\mathbb{E}^2_{\tilde{E}|X^n}(\tilde{E}(X^n)|X^n)\right) -\mathbb{E}_{\mathscr{S}} (\bar{P}_{\mathbf{i}_1})}{\gamma}\\
 &\leq \frac{2^mO(e^{-n\delta_X})+    \mathbb{E}_{\mathscr{S}}(\bar{P}_{\mathbf{i}_1})+O(\epsilon) -\mathbb{E}_{\mathscr{S}} (\bar{P}_{\mathbf{i}_1})}{\gamma}\\
 &= \frac{2^mO(e^{-n\delta_X})+O(\epsilon) }{\gamma}
\end{align*}
Where in the last inequality we have used the second property in Definition \ref{def:SLCS}. The last line goes to 0 as $n\to\infty$. This completes the proof.

\end{IEEEproof}
\subsection{Proof of Theorem \ref{th:main}}

\begin{IEEEproof}
  From Theorem \ref{th:sec3}, we have:
  \begin{align*}
     \mathbf{P}^{\frac{1}{2}}\mathbf{Q}^{\frac{1}{2}} -2\sum_{\mathbf{i}}C_{\mathbf{i}}\mathbf{P}^{\frac{1}{2}}_{\mathbf{i}}\mathbf{Q}^{\frac{1}{2}}_{\mathbf{i}}\leq  P(E(X^n)\neq F(Y^n)).
\end{align*}
From Theorem \ref{th:sec4} we have:
\begin{align}
& \forall  m\in \mathbb{N}, \gamma>0,
\\& P_{\mathscr{S}}(\sum_{{\mathbf{i}}:N_{\mathbf{i}}\leq m, \mathbf{i}\neq \mathbf{i}_1} \mathbf{P}_{{\mathbf{i}}}< \gamma)\to 1, \qquad P_{\mathscr{S}}(\sum_{{\mathbf{i}}:N_{\mathbf{i}}\leq m, \mathbf{i}\neq \mathbf{i}_1} \mathbf{Q}_{{\mathbf{i}}}< \gamma)\to 1.
\label{eq:bound}
\end{align}
Note that:
\begin{align}
& \sum_{{\mathbf{i}}:N_{\mathbf{i}}\leq m, \mathbf{i}\neq \mathbf{i}_1} \mathbf{P}_{{\mathbf{i}}}< \gamma ,
 \sum_{{\mathbf{i}}:N_{\mathbf{i}}\leq m, \mathbf{i}\neq \mathbf{i}_1} \mathbf{Q}_{{\mathbf{i}}}< \gamma
 \\&  \Rightarrow
 \sum_{\mathbf{i}}C_{\mathbf{i}}\mathbf{P}^{\frac{1}{2}}_{\mathbf{i}}\mathbf{Q}^{\frac{1}{2}}_{\mathbf{i}} > (1-2\epsilon)(\mathbf{P}_{\mathbf{i}_1}+\gamma)^{\frac{1}{2}}(\mathbf{Q}_{\mathbf{i}_1}+\gamma)^{\frac{1}{2}}+ (1-2\epsilon)^m \mathbf{P}^{\frac{1}{2}}\mathbf{Q}^{\frac{1}{2}},
 \label{eq:bound2}
\end{align}
which converges to $(1-2\epsilon)\mathbf{P}_{\mathbf{i}_1}^{\frac{1}{2}}\mathbf{Q}_{\mathbf{i}_1}^{\frac{1}{2}}+(1-2\epsilon)^m\mathbf{P}^{\frac{1}{2}}\mathbf{Q}^{\frac{1}{2}}$ as $\gamma\to 0$. Also $C_{\mathbf{i}}$ is decreasing in $N_{\mathbf{i}}$ and goes to 0 as $N_{\mathbf{i}}\to \infty$. Choose $\gamma$ small enough and $m$ large enough such that $(1-2\epsilon)(\mathbf{P}_{\mathbf{i}_1}+\gamma)^{\frac{1}{2}}(\mathbf{Q}_{\mathbf{i}_1}+\gamma)^{\frac{1}{2}}+(1-2\epsilon)^m \mathbf{P}^{\frac{1}{2}}\mathbf{Q}^{\frac{1}{2}}-(1-2\epsilon)\mathbf{P}_{\mathbf{i}_1}^{\frac{1}{2}}\mathbf{Q}_{\mathbf{i}_1}^{\frac{1}{2}}< \delta $. Then Equations \eqref{eq:bound} and \eqref{eq:bound2} gives 
\[
P_{\mathscr{S}}\left(P_{X^n,Y^n}\left(E(X^n)\neq F(Y^n)\right)<2\mathbf{P}^{\frac{1}{2}}\mathbf{Q}^{\frac{1}{2}}- 2(1-2\epsilon)\mathbf{P}_{\mathbf{i}_1}^{\frac{1}{2}}\mathbf{Q}_{\mathbf{i}_1}^{\frac{1}{2}}-\delta\right)\to 0.\]
This is equivalent to the statement of the theorem.
\end{IEEEproof}

\end{document}